\newcommand{\newcomment}[1]{}
\newcommand{\R}{\mathbb{R}}
\newcommand{\nnz}{\mathbf{nnz}}
\newcommand{\dist}{\mathrm{dist}}
\newcommand{\tr}{\mathrm{tr}}
\newcommand{\poly}{\mathrm{poly}}
\newcommand\numberthis{\addtocounter{equation}{1}\tag{\theequation}}
\declaretheorem[name=Theorem]{thm}
\newtheorem{fact}[thm]{Fact}
\newtheorem{defi}[thm]{Definition}
\newtheorem{lem}[thm]{Lemma}
\newtheorem{cor}[thm]{Corollary}
\begin{document} 
 \title{Faster Coreset Construction  for  Projective Clustering \textit{via} Low-Rank Approximation}
\author{
Rameshwar Pratap  \inst{1} \thanks{This work done when author was affiliated with TCS Innovation Labs.}
  \and
Sandeep Sen\inst{2}
}
\institute{Wipro Technologies, Bangalore, India\\
	\email{rameshwar.pratap@gmail.com} 
\and IIT Delhi, India\\
\email{ssen@cse.iitd.ernet.in}} 
   
\date{}
\maketitle

\begin{abstract}

In this work, we  present a randomized coreset construction for projective 
clustering, which  involves computing a set of $k$ closest $j$-dimensional
linear (affine) subspaces of a given set of $n$ vectors in $d$ dimensions. 
Let $A \in \R^{n\times d}$ be an  input matrix. An  earlier deterministic 
coreset construction of Feldman \textit{et. al.}~\cite{FSS13} relied on computing the SVD 
of $A$. The best known algorithms for SVD require $\min\{nd^2, n^2d\}$ time, 
which may not be feasible for large values of $n$ and $d$.
We  present a coreset construction by projecting the matrix $A$ on 
some orthonormal vectors that closely approximate the right singular 
vectors of $A$. As a consequence, 
when the values of  $k$ and $j$ are small,
we are able to achieve a   
faster algorithm,  as compared to~\cite{FSS13}, while maintaining  almost 
the same approximation. We also benefit in terms of space as well as 
exploit the sparsity of the input dataset. Another advantage of our 
approach  is that it can be constructed in a streaming setting 
quite efficiently. 
\end{abstract}

\section{Introduction}\label{sec:Intro} 
\paragraph{Succinct representation of Big data -- Coreset:} 
Recent years have witnessed a dramatic increase in our ability to collect data 
from various sources. 
This data flood has surpassed our ability to understand, 
analyse and process them.  \textit{Big data} is a new terminology that has become
  quite popular in identifying such datasets that are difficult to analyse 
with the current available technologies.  
One possible approach to manage such large volume 
of  datasets is to  keep a succinct summary of the datasets such that it approximately preserves 
the required properties of the original datasets. This notion was initially  formalised 
by Agrawal \textit{et al.}~\cite{AgarwalHV04}, 
and they coined the term \textit{coreset} for such summaries. 
Intuitively, a coreset can be considered as a semantic compression of the input. 
For example: in the case of clustering, 
  a coreset is a weighted subset of the data such that  the cost of a clustering algorithm 
evaluated on the coreset closely approximates to the corresponding cost on the
entire dataset. 
Consider  a   set $Q$ (possibly of infinite size) of query shapes (for example: subspaces,  set of points, 
set of lines etc.), then for every shape $q\in Q$, the sum of distances from $q$ to the  
input points,  and  the sum of distances from $q$ to the points in the  coreset, is approximately the same. 
If the query set belongs to some particular  candidate query set, then such coreset is called as a \textit{weak} 
coreset~\cite{Mahoney11}; and if the coreset approximates the distances from all possible (potentially infinite) 
query shapes, then it is called as \textit{strong} coreset. 
Coresets are a practical  and flexible tool which require  no or  minimal assumption on the data. 
Although the analysis  techniques for coreset construction are a bit involved, and require 
tools from computational geometry and linear algebra, the resulting coreset construction algorithms are  easy 
to implement. Another important property of coresets is that they can be constructed in a streaming 
and distributed setting quite efficiently. This is due to the fact that unions of coresets are coresets, 
and coresets of coresets are also  coresets~\cite{Har-PeledM04}. Also, 
using these properties it is possible to construct coresets in a tree-wise fashion which can be parallelized 
in a Map-Reduce style~\cite{FSS13}.

Coreset constructions have been studied extensively for various data analysis tasks. 
There are usually two steps involved in the coreset construction --
dimensionality reduction,  and cardinality reduction.
 The dimension reduction step of the coreset construction 
 includes projecting points in a low dimension space such that the original 
geometry of points is also preserved  in the low dimension. These projection techniques 
includes SVD decomposition, random projections, row/column subset selections, or any combinations of 
these (see~\cite{FSS13,CohenEMMP151}). The cardinality reduction step includes contracting the input 
size \textit{via} sampling or other geometric analysis approach on  the reduced dimension instance of the input. 
We refer readers to   survey articles of Jeff M. Phillips~\cite{Phillips16} and 
Agarwal \textit{et al.}~\cite{AgarwalHV07}. 

 
In this work,  we focus on the dimension reduction step of coreset construction for the projective clustering problem. 
In the  paragraph below, we discuss the motivation behind the projective clustering problem.

\paragraph{Projective clustering:} Clustering is one of the most popular techniques for analyzing large data, and  is  widely 
used in many areas such as classification, unsupervised   learning, data mining, indexing, pattern recognition. 
Many popular clustering algorithms such as $k$-means, BIRCH~\cite{birch}
, DBSCAN~\cite{dbscan} 
are full 
dimensional -- they give equal importance to all the dimensions while computing the distance between two points. 
These clustering algorithms works well in low dimensional datasets, however, due to the 
  \textit{``curse of dimensionality''}  such algorithms scale poorly in high dimensions. 
Moreover, in   high dimensional datasets a full dimensional distance might not 
be appropriate as farthest neighbour of a point is expected to be roughly as close 
as its nearest neighbour~\cite{HinneburgAK00}. 
These problems are often handled \textit{via} methods such as Principal component analysis (PCA) 
 or Johnson-Lindenstrauss lemma 
 by finding a low dimensional representation of the data 
  obtained by projecting all points on a subspace 
 so that the information loss is minimized. 
However, projecting all the points in a single low dimensional subspace may not be appropriate 
when different clusters lie in different subspaces. This motivates the study of projective clustering   
which involves finding clusters along different subspaces. 
Projective clustering algorithms have been widely applicable for indexing 
and pattern discovery 
in high dimensional datasets.

\subsection{Our contribution}
With the above motivation we study the dimension reduction step of coreset construction for projective clustering problem. 
We first briefly describe the subspace and projective clustering problems.  
In a $j$-subspace clustering problem, given  a set of $n,$ $d$ dimensional vectors, 
denoted by $A \in \R^{n\times d}$, 
the problem is to find a  $j$-dimensional subspace such that it minimizes the sum of squared 
distances from the rows of $A$, over every $j$-dimensional subspace. Further, in the problem of  
 linear (affine) $(k, j)$-projective clustering, the goal is to find a  
closed set $\mathcal{C}$  which is the union of $k$ linear (affine) subspaces  
 each of dimension $j$, such that  it minimizes the sum of squared distances from the rows of $A$, over 
 every possible choice of $\mathcal{C}$ (see Definitions~\ref{defi:jsubspace},\ref{defi:kjsubspace}).

 Feldman \textit{et al.}~\cite{FSS13} presented a deterministic 
  coreset  construction for these clustering problems.  Their coreset construction relies on projecting 
the rows of $A$ on the first few right singular values of $A$. However, the main drawback of their construction is that 
it requires computing the SVD of $A$ which is expensive for large values of $n$ and $d.$    Cohen \textit{et al.}\cite{CohenEMMP151} suggested \textit{``projection-cost-preserving-sketch''} for various 
clustering problems. 
Their sketches are essentially the dimensionality reduction step of the coreset construction. Using a low rank approximation 
of $A$, they suggested a faster coreset construction for the 
subspace clustering problem. However, it was not clear that how their techniques could be extended for projective 
clustering problem. 
In this work, we extend their techniques and obtain a faster dimension reduction for projective clustering, 
and as a consequence, a faster coreset construction for the projective clustering problem.  
  In Section~\ref{sec:jSubspace}, 
we first revisit the techniques for subspace clustering problem, and in Section~\ref{sec:kjSubspace} we present our coreset 
construction for projective clustering problem. We state our main result as follows:
 (In the following theorem, $\nnz(A)$ denotes the number of non-zero entries of $A.$)
  \begin{theorem}\label{thm:thmmainKJ}
  Let $A\in \R^{n\times d}$, $\epsilon\in(0, 1)$, and $j, k$ be two integers less than $(d-1)$, and $(n-1)$ 
respectively such that $k(j+1) \leq d-1$.
  Then there is a randomized algorithm which outputs a matrix $A^*$ of rank $O\left({k(j+1)}/{\epsilon^2}\right)$
    such that for every    non-empty closed set $\mathcal{C}$, which is the union of $k$ linear 
  (affine) subspaces each of dimension at most $j$,  
    the following holds \textit{w.h.p.}
 $$
   \left|\left(\dist^2(A^*, \mathcal{C})+\Delta^*\right)-\dist^2(A, \mathcal{C})\right|\leq \epsilon\dist^2(A, \mathcal{C}).
$$
 Where, 
   $j^*=k(j+1)$; 
   $\Delta^*= ||A-A^{O(\frac{j^*}{\epsilon^2})}||^2_F$;  $\dist^2(A, \mathcal{C})$ denotes the sum 
   of squared distances from each row of $A$ to its closest point in $\mathcal{C}$; and  
   $A^{O(\frac{j^*}{\epsilon^2})}$ 
   is the best
   rank $O(\frac{j^*}{\epsilon^2})$ approximation of $A$. The   running time of the algorithm is
  ${O}\left(\nnz(A) \frac{j^*}{\epsilon^3} + (n + d)\frac{{j^*}^2}{\epsilon^6}\right).$
        \end{theorem}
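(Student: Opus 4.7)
My strategy is to reduce projective clustering to the subspace-clustering problem already handled in Section~\ref{sec:jSubspace} by taking ``effective'' target dimension $j^{*}=k(j+1)$, and then to lift that section's guarantee from a single subspace to an arbitrary union of subspaces.

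\emph{Step 1 (Sketch construction).} Using an input-sparsity-time randomised low-rank approximation (for concreteness, the count-sketch based scheme of Clarkson--Woodruff), I would compute an orthonormal $Z\in\R^{d\times m}$ with $m=O(j^{*}/\epsilon^{2})$ columns such that $A^{*}:=AZZ^{T}$ is a $(1+\epsilon)$-factor best rank-$m$ Frobenius approximation of $A$. The three terms in the stated running time come directly from the three stages of this construction: $\tilde O(\nnz(A)\,j^{*}/\epsilon^{3})$ for the oblivious sketching, $\tilde O((n+d){j^{*}}^{2}/\epsilon^{6})$ for the subsequent small-matrix SVD/QR post-processing, and $\tilde O(nd\,j^{*}/\epsilon^{2})$ for materialising $AZZ^{T}$ explicitly. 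The additive offset $\Delta^{*}$ is read off from the residual energy of this approximate SVD.

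\emph{Step 2 (Geometric reduction).} Any set $\mathcal{C}$ that is a union of $k$ affine subspaces of dimension at most $j$ is contained in a single linear subspace $V^{*}$ of dimension at most $k(j+1)=j^{*}$: each affine $j$-flat lies inside a linear $(j+1)$-flat through the origin, and their $k$-fold union spans at most $k(j+1)$ directions. Hence any matrix $X$ whose rows all lie in $\mathcal{C}$ has rank at most $j^{*}$, so
\[
\dist^{2}(A,\mathcal{C}) \;=\; \min_{X:\ \text{rows}(X)\subseteq \mathcal{C}} \|A-X\|_{F}^{2},
\]
and the analogous identity holds for $A^{*}$.

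\emph{Step 3 (Applying the rank-$j^{*}$ PCP inequality).} I would then invoke the projection-cost-preserving guarantee (Section~\ref{sec:jSubspace}, following \cite{CohenEMMP151}): for every matrix $X$ of rank at most $j^{*}$,
\[
\bigl|\,\|A-X\|_{F}^{2} - (\|A^{*}-X\|_{F}^{2} + \Delta^{*})\,\bigr| \;\le\; \epsilon\,\|A-X\|_{F}^{2}.
\]
Taking the minimum on both sides over $X$ with rows in $\mathcal{C}$ (which, by Step~2, are all rank-$j^{*}$ matrices) yields exactly the bound claimed in the theorem for $\dist^{2}(A,\mathcal{C})$ versus $\dist^{2}(A^{*},\mathcal{C})+\Delta^{*}$.

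\emph{Main obstacle.} Section~\ref{sec:jSubspace} phrases its PCP-type bound only for rank-$j^{*}$ orthogonal \emph{projection} matrices; the key technical step is upgrading it to arbitrary rank-$j^{*}$ matrices $X$. The plan is to do this via the Pythagorean split $\|A-X\|_{F}^{2} = \|A(I-P_{V})\|_{F}^{2} + \|AP_{V} - X\|_{F}^{2}$ along the row-space $V$ of $X$: the first summand is covered by the subspace version of PCP, while the cross-term $\langle A-A^{*},X\rangle$ is bounded by combining the spectral bound $\|A-A^{*}\|_{\text{op}} = O(\sigma_{m+1}(A))$ with a nuclear-norm estimate $\|X\|_{*}\le\sqrt{j^{*}}\,\|X\|_{F}$ coming from the rank-$j^{*}$ constraint. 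The choice $m=\Theta(j^{*}/\epsilon^{2})$ is precisely what forces $\sigma_{m+1}\sqrt{j^{*}}$ below an $\epsilon$-fraction of $\|A-AP_{V}\|_{F}$, and it is also what produces the $\epsilon^{-6}$ factor in the post-processing running time. The ``w.h.p.''\ conclusion is inherited from the randomised SVD of Step~1 via standard boosting.
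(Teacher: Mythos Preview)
Your Steps~1 and~2 line up with the paper: the construction is the same (Sarl\'os rather than Clarkson--Woodruff, but that is cosmetic), and the observation that $\mathcal{C}$ lies in a single linear subspace $L^{*}$ of dimension $j^{*}=k(j+1)$ is exactly how the paper reduces to a $j^{*}$-dimensional problem.

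The gap is in your resolution of the ``main obstacle.'' Your proposed bound on the cross-term $\langle A-A^{*},X\rangle$ via $\|A-A^{*}\|_{\mathrm{op}}\cdot\|X\|_{*}\le \sigma_{m+1}(A)\sqrt{j^{*}}\,\|X\|_{F}$ does not close: the factor $\|X\|_{F}$ is \emph{not} controlled by $\dist(A,\mathcal{C})$ or by $\|A(I-P_{V})\|_{F}$. The matrix $X$ here is the matrix of nearest points in $\mathcal{C}$, so generically $\|X\|_{F}$ is of order $\|A\|_{F}$, which can be arbitrarily large compared to $\dist(A,\mathcal{C})$. Hence the inequality ``$\sigma_{m+1}\sqrt{j^{*}}$ is below an $\epsilon$-fraction of $\|A-AP_{V}\|_{F}$'' is not enough: you would need it below $\epsilon\,\dist^{2}(A,\mathcal{C})/\|X\|_{F}$, and no choice of $m$ independent of $A$ ensures that. (A secondary issue: a $(1+\epsilon)$-Frobenius low-rank approximation does not by itself imply $\|A-A^{*}\|_{\mathrm{op}}=O(\sigma_{m+1}(A))$; you would need a separate spectral guarantee.)

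The paper avoids any cross-term in $X$ altogether. After the Pythagorean split along $L^{*}$ (your $P_{V}$), it handles the orthogonal part exactly as you say, via Theorem~\ref{theorem:jSubspaceCoreset}. For the within-subspace part it does \emph{not} try to prove a PCP for arbitrary rank-$j^{*}$ matrices; instead it compares $\dist^{2}(A X^{*}{X^{*}}^{T},\mathcal{C})$ and $\dist^{2}(A^{*}X^{*}{X^{*}}^{T},\mathcal{C})$ row-by-row using the weak triangle inequality of~\cite{FSS13} (Lemma~\ref{lem:triangleIneq}),
\[
|\dist^{2}(p,\mathcal{C})-\dist^{2}(q,\mathcal{C})|\le \tfrac{12}{\varepsilon}\|p-q\|^{2}+\tfrac{\varepsilon}{2}\dist^{2}(p,\mathcal{C}),
\]
summed over rows. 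This reduces everything to bounding $\|(A-A^{*})X^{*}{X^{*}}^{T}\|_{F}^{2}$, which is done by Lemma~\ref{lem:difFrobNorm} (a direct consequence of the same trace argument as Lemma~\ref{lem:projbound}), yielding $\le\tfrac{\epsilon^{2}}{26}\|A{X^{*}}^{\perp}\|_{F}^{2}$. The $\epsilon^{2}$ here, balanced against the $12/\varepsilon$ in the triangle inequality with $\varepsilon=\epsilon$, is precisely why $m=\Theta(j^{*}/\epsilon^{2})$ rather than $\Theta(j^{*}/\epsilon)$. This is the key lemma your plan is missing.
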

\begin{remark}
We develop our coreset by projecting points 
on some orthonormal vectors that closely approximate the right singular vectors of $A$, 
and we  obtain them using the algorithm of Sarl{\'{o}}s~\cite{sarlos06}.  
The   running time of our algorithm is better than the corresponding deterministic 
algorithm of~\cite{FSS13} 
when $n\geq d$ and $j^*=o(n)$, or, when  $n<d$ and $j^*=o(d)$, where $j^*=k(j+1)$.
Further, as the  coreset construction time depends on the number of non-zero entries of the matrix, our
algorithm is substantially faster for sparse data matrices.
 Please note that one can also use any other low-rank approximation algorithms  
 such as~\cite{ClarksonW13} (instead of~\cite{sarlos06}), which 
 offer multiplicative approximation guarantee.  
 However, for completeness sake we use the bounds of~\cite{sarlos06}, and compare our   results with~\cite{FSS13}.
\end{remark}
\begin{remark}
 The term $\Delta^*$ is a positive constant, and is sum of squared singular values from $O(j^*/\epsilon^2)$ to $d$. 
 We  use $A^*$  to approximately solve the clustering problem, and add the constant $\Delta^*$ in the clustering cost obtained 
 from $A^*$, this sum 
 gives a good approximation \textit{w.r.t.} the cost of  clustering on  $A$.
\end{remark}

\begin{remark}
 An advantage of our coresets is that it can be constructed in the pass efficient 
 streaming model~\cite{HenzingerRR98}, where access to the input is limited to only a constant number 
 of sequential passes. We construct our coreset by projecting 
 the matrix $A$ on orthonormal vectors, 
  that closely approximate the
 right singular vectors of $A$, 
  our algorithm requires only two passes over the data in order to compute those orthonormal vectors using~\cite{sarlos06}.
\end{remark}

 \subsection{Related work}
Coreset construction has been studied extensively for the problem of $j$-subspace clustering. However, we will discuss 
a few of them  that are more relevant to our work.
Feldman \textit{et al.}~\cite{FeldmanFS06}  developed a strong coreset whose size is
exponential in $d, j$, logarithmic in $n$, and their coreset construction requires $O(n)$ time.
Feldman \textit{et al.}~\cite{FeldmanMSW10} improved their earlier result~\cite{FeldmanFS06} and developed a  
 coreset of size    logarithmic in $n$, linear in $d$, and exponential in $j$. 
However, the construction requires $O(ndj)$ time. 
In~\cite{FeldmanL11} Feldman and Langberg showed a coreset construction of size  polynomial in $j$ and $d$ (independent of $n$). 
Feldman \textit{et al.}~\cite{FSS13} presented a novel coreset 
construction for subspace and projective clustering.
They showed that the sum of squared Euclidean distance from 
$n$ rows of $A\in \R^{n \times d}$ to any $j$-dimensional subspace can be approximated upto 
$(1+\epsilon)$ factor, with an additive constant which is the sum of a few last singular values of 
$A$, by projecting the points on the  
first  $O({j}/{\epsilon})$ right singular vectors of   $A$. 
Thus, they were able to show the dimension reduction from $d$ to $O({j}/{\epsilon})$ . 
They also showed $O({k(j+1)}/{\epsilon^2})$ dimension reduction for $(k, j)$-projective clustering problem.
Recently,  for $j$-subspace clustering, Cohen \textit{et al.}~\cite{CohenEMMP151} improved the  construction 
of~\cite{FSS13}   using only first  $\lceil {j}/{\epsilon} \rceil$ right singular vectors, 
which is an improvement over~\cite{FSS13} by a constant factor.

Sariel Har-Peled~\cite{HarPeled04} showed that for projective clustering problem 
it is not possible to get a strong coreset of size 
sublinear in $n$, even for a simpler instance such as a family of pair of planes in $\R^3$.  However, in a restricted setting, where
points  are on an integer grid, and the largest coordinate of any point is bounded 
by a polynomial  in $n$ and $d$,
Varadarajan \textit {et al.}~\cite{VaradarajanX12} showed that a sublinear sized coreset construction for projective clustering.

\paragraph*{Organization of the paper:}
In Section~\ref{sec:prelim}, we present the necessary notations, definitions and linear 
algebra background that are used in the various proofs in the paper. 
In Section~\ref{sec:jSubspace}, 
we revisit the result of~\cite{CohenEMMP151}, and discuss the coreset construction for subspace 
clustering using their techniques.  In Section~\ref{sec:kjSubspace}, 
we extend the 
result of Section~\ref{sec:jSubspace}, and 
 present the coreset construction for 
projective clustering problem. 
We conclude our discussion, and state some open questions in Section~\ref{sec:conclusion}.
\section{Preliminaries}\label{sec:prelim} 
\vspace{-0.7cm}
\begin{center}
\begin{tabular}{|c|l|}
\hline
 \multicolumn{2}{|c|}{\bf Notations}\\
 \hline
 $A=U \Sigma V^T$& columns of $U, V$ are orthonormal and called as
 left and  right \\ & singular vectors  of  $A$; $[\Sigma]$ is a diagonal matrix having  the \\ & corresponding singular values\\
\hline
$A^{(m)}=U\Sigma^{(m)}V^T$ &  $\Sigma^{(m)}$ is the diagonal having the $m$ largest entries of $\Sigma$,  and \\ & $0$ otherwise\\
\hline
$[X]_{d\times j}$ &  $j$ orthonormal columns represent a $j$-dimensional subspace \\ & $L$ in $\R^d$\\
\hline
$[X^{\perp}]_{d\times (d-j)}$ &$(d-j)$  dimensional subspace $L^{\perp}$ orthogonal to subspace $L$\\ 
\hline
 $\pi_{\mathcal{S}}(A)$ &   matrix formed by projecting   $A$   on the row span of $\mathcal{S}$\\
 \hline
 $\pi_{\mathcal{S}, k}(A)$ &  the best rank-$k$ approximation of $A$ after projecting its rows \\ & on the row span of $\mathcal{S}$\\
 \hline
 $A^{(k)}$ & the best rank-$k$ approximation of $A$\\
 \hline
 $\nnz(A)$ & the number of non-zero entries of $A$\\
 \hline
 \end{tabular}
\end{center}
 
 Below we present some necessary linear algebra background. 
We first present some basic properties of 
Frobenius norm of a matrix. We define SVD
(singular value decomposition) of a  matrix, and its basic properties. We describe 
the expression about the distance of a point, and sum of square distances of the rows of matrix -  
from a subspace and a closed set. 

\begin{fact}[Frobenius norm and its properties]\label{fact:forbNormDef}
 Let $A\in \R^{n \times d}$, then  square of  
 Frobenius norm of $A$ is defined as the sum of the absolute squares of its elements, \textit{i.e.} 
 $||A||^2_F=\Sigma_{i=1}^n\Sigma_{j=1}^d a_{i, j}^2$. Further, if $\{\sigma_i\}_{i=1}^d$ are singular
 values of $A$, then $||A||^2_F= \Sigma_{i=1}^d \sigma^2_i$. Also, if $\tr(A)$ be the trace of the matrix $A$
 then $||A||^2_F=\tr(A^TA).$ 
\end{fact}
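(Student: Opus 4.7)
The plan is to verify the three claimed identities in order, starting from the definition and chaining simple algebraic identities.

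The first equality, $\|A\|_F^2=\sum_{i,j}a_{i,j}^2$, is taken as the definition and requires no argument. To obtain the trace identity, I would compute the diagonal entries of $A^TA$ directly: the $(i,i)$ entry equals $\sum_{k=1}^n (A^T)_{i,k} A_{k,i}=\sum_{k=1}^n a_{k,i}^2$. Summing over $i$ from $1$ to $d$ gives $\tr(A^TA)=\sum_{i=1}^d \sum_{k=1}^n a_{k,i}^2=\sum_{k,i}a_{k,i}^2=\|A\|_F^2$, which establishes the third identity.

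For the middle identity, I would substitute the SVD $A=U\Sigma V^T$ into $A^TA$. Since $U^TU=I$, one gets $A^TA=V\Sigma^T U^T U\Sigma V^T=V(\Sigma^T\Sigma)V^T$. The matrix $\Sigma^T\Sigma$ is a $d\times d$ diagonal matrix whose diagonal entries are $\sigma_1^2,\ldots,\sigma_d^2$ (with zeros appended if $n<d$, but in any case the nonzero diagonal entries are the squared singular values). Applying the cyclic invariance of the trace, $\tr(V(\Sigma^T\Sigma)V^T)=\tr((\Sigma^T\Sigma)V^TV)=\tr(\Sigma^T\Sigma)=\sum_{i=1}^d \sigma_i^2$. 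Chaining with the previously established $\|A\|_F^2=\tr(A^TA)$ yields $\|A\|_F^2=\sum_{i=1}^d \sigma_i^2$.

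There is no real obstacle here; the only thing to be careful about is the bookkeeping on matrix shapes, namely that $\Sigma$ need not be square when $n\ne d$, so one must confirm that $\Sigma^T\Sigma$ is a $d\times d$ diagonal matrix whose diagonal is exactly $(\sigma_1^2,\ldots,\sigma_d^2)$ (padding with zeros if $n<d$). Once this is observed, the proof is a one-line computation using cyclic invariance of trace and orthonormality of the singular vectors.
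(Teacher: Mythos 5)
Your proof is correct and is the standard argument; the paper states this as a background Fact without proof, and your chain (entrywise definition $\to$ $\tr(A^TA)$ by computing diagonal entries $\to$ singular-value identity via $A^TA=V\Sigma^T\Sigma V^T$ and cyclic invariance of trace) is exactly the computation one would supply, with the shape bookkeeping for non-square $\Sigma$ handled appropriately.
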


\begin{fact}\label{fact:fact3}
   Let  $AX$ be the  projection of points of $A$ on the $j$-dimensional subspace $L$ represented by an orthonormal matrix $X$. We can also write the projection of the points in the rows of $A$ to $L$ as $AXX^T$,  these projected points 
    are still $d$-dimensional, but lie within the $j$-dimensional subspace. Further, $||AX||_F^2=||AXX^T||_F^2.$
\end{fact}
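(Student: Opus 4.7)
The plan is to first verify the matrix expression for the embedded projection, and then establish the Frobenius norm equality via the trace characterization given in Fact~\ref{fact:forbNormDef}.

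First, I would recall that since $X \in \R^{d \times j}$ has orthonormal columns spanning the subspace $L$, the orthogonal projection of a single vector $a \in \R^d$ onto $L$, written back in the ambient space $\R^d$, is given by the standard formula $XX^T a$. The $j$-dimensional coordinate vector of this projection in the basis $X$ is $X^T a$. Assembling these relations row by row over the rows of $A$ yields two matrices: $AX \in \R^{n \times j}$ consisting of the coordinate vectors in the basis $X$, and $AXX^T \in \R^{n \times d}$ consisting of the same projected points embedded back in $\R^d$. This justifies the interpretive part of the statement.

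Next, for the Frobenius norm equality, I would apply $\|M\|_F^2 = \tr(M^T M)$ from Fact~\ref{fact:forbNormDef}. Expanding,
\begin{equation*}
\|AXX^T\|_F^2 = \tr\bigl((AXX^T)^T(AXX^T)\bigr) = \tr\bigl(XX^T A^T A X X^T\bigr).
\end{equation*}
Using the cyclic property of the trace to rotate the trailing $XX^T$ to the front, and then invoking the orthonormality relation $X^T X = I_j$, I obtain
\begin{equation*}
\tr\bigl(XX^T A^T A X X^T\bigr) = \tr\bigl(X^T X \cdot X^T A^T A X\bigr) = \tr\bigl(X^T A^T A X\bigr) = \|AX\|_F^2,
\end{equation*}
which is the desired identity.

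There is no real obstacle here: the statement is a routine linear-algebra computation. The only point requiring care is to keep straight that $AX$ and $AXX^T$ live in different ambient spaces ($\R^j$ vs.\ $\R^d$) yet represent the same geometric set of projected points, so it is intuitive that their Frobenius norms coincide — this is simply the fact that multiplying by an orthonormal $d \times j$ matrix $X$ (or its transpose) is an isometric embedding.
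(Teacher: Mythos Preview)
Your proof is correct. The paper itself states Fact~\ref{fact:fact3} without proof, treating it as an elementary linear-algebra identity; your trace computation using $\|M\|_F^2=\tr(M^TM)$, the cyclic property, and $X^TX=I_j$ is exactly the standard verification and would be an appropriate justification to supply.
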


 \paragraph{ The Singular Value Decomposition:}
A matrix $A \in \R^{n\times d}$ of rank at most $r$ can be written due to its SVD decomposition as 
$A=\Sigma_{i=1}^r \sigma_t u^{(i)} {v^{(i)}}^T$. 
Here,  $u^{(i)}$ and ${v^{(i)}}$ are $i$-th orthonormal columns of $U$ and $V$ respectively, 
and $\sigma_1\geq \sigma_2, \ldots \sigma_r\geq 0.$ Also, ${u^{(i)}}^TA=\sigma_i{v^{(i)}}^T $, and $A{v^{(i)}}=\sigma_i{u^{(i)}}$ for $1\leq i\leq r.$ 
Further, 
 the matrix $A^{(k)}$ that minimizes $||A-B||_F$ among all matrices $B$ (of rank at most $k$) is given by 
$  A^{(k)}=\Sigma_{i=1}^k A v^{(i)} {v^{(i)}}^T$- \textit{i.e.} by projecting $A$ on the first $k$ right 
singular vectors of $A$.
 
\paragraph{$l_2$ distances to a subspace:} Let $L$ be a $j$-dimensional subspace in $\R^d$  represented 
by an orthonormal matrix $X\in \R^{d\times j}$. 
Then,  for a point $p\in \R^d$, $||p^TX||_F^2$ is the  squares of the length of projections of the point $p$ 
 on the subspace $L$. Similarly,   given a matrix $A\in \R^{n\times d}$, $||AX||_F^2$ is the sum of  squares of 
the length  of projections of the points 
(rows) of $A$ on the subspace $L$. Let $L^{\perp}$ be the orthogonal complement of $L$ represented 
by an orthonormal matrix $X^{\perp}\in \R^{d\times (d-j)}$. Then, $||AX^{\perp}||_F^2$ is the sum 
of   squares of distances of the points  of $A$ from  $L$.  

\paragraph{$l_2$ distance  to a closed set:} Let $S\in \R^d$ be a closed set and $p$ be a point in $\R^d$. 
We define the $l_2$ distance between $p$ and $S$ by
$\dist^2(p, S):=\min_{s\in S} \dist^2(p, s)$, \textit{i.e.}, the smallest distance between $p$ and any element $s\in S.$
If $S$ consists of union of $k$,  $j$-dimensional subspaces $L_1, \ldots, L_k$, then $\dist^2(p, S)$ 
denotes the distance from $p$ to the closest set $S$. Similarly,   given a matrix $A\in \R^{n\times d}$,
$\dist^2(A, S):=\Sigma_{i=1}^n\dist^2(A_{i^*}, S).$ Here, $A_{i^*}$ denotes the $i$th row of $A$.

\paragraph{Pythagorean theorem:} Let $A\in \R^{n\times d}$,  $L$ be a $j$-dimensional subspace 
in $\R^d$  represented by an orthonormal matrix $X\in \R^{d\times j}$, and  $L^{\perp}$ be the 
orthogonal complement of the subspace $L$ represented by an orthonormal matrix $X\in \R^{d\times d-j}$. 
Then by Pythagorean theorem we have $||A||_F^2=||AX||^2_F+||AX^{\perp}||^2_F$. 
Further, if $\mathcal{C}$ is a closed set spanned by $X$, then  due to the Pythagorean theorem we have
 $\dist^2(A, \mathcal{C})=||A{{X}^{\perp}}||^2_F+\dist^2(A{X}{X}^T, \mathcal{C})$.
 We will use the following fact in our analysis which hold true due to Pythagorean theorem.
 
 \begin{fact}\label{fact:fact2}
   Let  $A\in \R^{n\times d}$, and $X\in \R^{d\times j}$ be 
a matrix having  first $j$ right singular vectors of $A$ as columns, then 
   due to the Pythagorean theorem, we have \\ $||A-AXX^T||_F^2=||A||^2_F-||AXX^T||_F^2.$ 
\end{fact}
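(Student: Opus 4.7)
The plan is to reduce the claimed Frobenius identity to the classical Pythagorean theorem applied row-by-row. Observe that since the columns of $X$ are orthonormal, the matrix $P = XX^T \in \R^{d\times d}$ is an orthogonal projector onto the $j$-dimensional column span of $X$, and $I_d - P$ is the projector onto its orthogonal complement. In particular $P^2 = P$, $P^T = P$, and for every vector $v \in \R^d$ the two vectors $Pv$ and $(I_d-P)v$ are orthogonal.

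First I would write the row-wise decomposition of $A$. For each row $A_{i*}^T \in \R^d$, split it as
\[
A_{i*}^T \;=\; P\,A_{i*}^T + (I_d - P)\,A_{i*}^T,
\]
where the two summands live in orthogonal subspaces. Applying the Pythagorean theorem in $\R^d$ to this single vector yields $\|A_{i*}^T\|_2^2 = \|P A_{i*}^T\|_2^2 + \|(I_d - P) A_{i*}^T\|_2^2$. Summing over $i = 1, \ldots, n$ and using that $\|M\|_F^2 = \sum_i \|M_{i*}\|_2^2$ for any matrix $M$, I get
\[
\|A\|_F^2 \;=\; \|A XX^T\|_F^2 + \|A - A XX^T\|_F^2,
\]
which is exactly the claimed identity after rearrangement.

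As a sanity check (and an alternative route if one prefers algebra to geometry), I would verify the same equality via the trace identity $\|M\|_F^2 = \tr(M^T M)$ from Fact~\ref{fact:forbNormDef}. Expanding $\|A - AXX^T\|_F^2 = \tr(A^TA) - 2\tr(XX^T A^T A) + \tr(XX^T A^T A XX^T)$ and using $X^T X = I_j$ together with cyclicity of trace collapses both cross terms into $\tr(X^T A^T A X) = \|AXX^T\|_F^2$, leaving $\|A\|_F^2 - \|AXX^T\|_F^2$.

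There is no substantive obstacle here: the only thing worth flagging is that the identity actually holds for any $X \in \R^{d\times j}$ with orthonormal columns; the hypothesis that the columns of $X$ are the top $j$ right singular vectors of $A$ is not used in the proof of this fact, and is only relevant when this identity is later combined with best rank-$j$ approximation guarantees elsewhere in the paper.
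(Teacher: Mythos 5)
Your proof is correct and takes essentially the approach the paper intends: the paper asserts this identity directly "due to the Pythagorean theorem," and your row-wise decomposition via the projector $XX^T$ (together with the trace verification, which mirrors the algebra the paper uses elsewhere, e.g.\ in the expansion leading to Equality~\ref{eq:eq1802}) simply makes that invocation precise. Your remark that only orthonormality of the columns of $X$ is needed, not that they are the top $j$ right singular vectors of $A$, is also accurate, and is implicitly relied upon later when the paper applies the same identity with the approximate singular vectors $R'$ in place of $X$.
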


In the following, we state some facts from elementary  linear algebra which are required for 
deriving the correctness of our result.
\begin{fact}\label{fact:trace}
 For a square matrix  $M\in \R^{n \times n}$,  $\tr(M)$ is the sum of all its diagonal entries. 
 Further, for  matrices $A\in \R^{n \times d}, B\in \R^{d \times n}$ due to the cyclic  property of the $\tr$ function, we have 
 $\tr(AB)=\tr(BA)$. Also for square matrices $M, N\in \R^{n \times n}$, due to the linear  property of the $\tr$ function: $\tr(M\pm N)=\tr(M)\pm\tr(N)$.
\end{fact}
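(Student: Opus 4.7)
The first clause is definitional, so there is nothing to prove: for any square $M=(m_{ij})\in\R^{n\times n}$, I would simply take $\tr(M):=\sum_{i=1}^n m_{ii}$ as the definition, which already yields the claim that $\tr(M)$ is the sum of the diagonal entries. The only substantive content of the fact lies in the cyclic and linearity properties, both of which reduce to bookkeeping with indexed sums.

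For the cyclic property $\tr(AB)=\tr(BA)$ with $A\in\R^{n\times d}$ and $B\in\R^{d\times n}$, my plan is to write out the diagonal entries of the products explicitly. Denoting $A=(a_{ij})$ and $B=(b_{ij})$, the $(i,i)$-entry of $AB$ is $\sum_{k=1}^d a_{ik}b_{ki}$, so $\tr(AB)=\sum_{i=1}^n\sum_{k=1}^d a_{ik}b_{ki}$. Similarly the $(k,k)$-entry of $BA$ is $\sum_{i=1}^n b_{ki}a_{ik}$, so $\tr(BA)=\sum_{k=1}^d\sum_{i=1}^n b_{ki}a_{ik}$. Since the two expressions are finite double sums of the same scalar terms $a_{ik}b_{ki}$, swapping the order of summation (Fubini for finite sums) gives the equality.

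For linearity, $\tr(M\pm N)=\tr(M)\pm\tr(N)$ on $\R^{n\times n}$, I would note that matrix addition is entrywise, so the $(i,i)$-entry of $M\pm N$ is $m_{ii}\pm n_{ii}$; summing over $i$ and using linearity of ordinary finite sums splits the expression as $\sum_i m_{ii}\pm\sum_i n_{ii}=\tr(M)\pm\tr(N)$.

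There is essentially no obstacle here: each clause is a two-line manipulation of finite sums once the definition of trace and the formula for matrix multiplication are in hand. The only small care needed is to check that the index ranges match properly in the cyclic identity, namely that $AB\in\R^{n\times n}$ and $BA\in\R^{d\times d}$ can have different sizes and yet share the same trace; this is transparent from the rewritten double sum, where both expressions equal $\sum_{i,k}a_{ik}b_{ki}$.
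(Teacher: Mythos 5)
Your proof is correct and is the standard argument: the paper states this fact without proof as elementary linear-algebra background, and your index computations (writing $\tr(AB)=\sum_{i,k}a_{ik}b_{ki}$ and swapping the order of the finite double sum for cyclicity, plus entrywise addition for linearity) are exactly what such a proof consists of. Your remark that $AB$ and $BA$ may have different sizes yet share the same trace is a worthwhile sanity check and is handled correctly.
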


\begin{fact}\label{fact:posSemiDef}
 A symmetric matrix $M\in \R^{n \times n}$ is positive semidefinite 
 if $x^TMx>0$ for all $x\in \R^n.$ A matrix $M$ is positive 
 semidefinite then the following two statements are equivalent:
 \begin{itemize}
  \item there is a real nonsingular matrix $N$ such that $M=N^TN$,
  \item  all eigenvalues of $M$ are nonnegative.
 \end{itemize}
\end{fact}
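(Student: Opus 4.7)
The plan is to establish each direction of the equivalence by appealing to the spectral theorem for real symmetric matrices, which furnishes an orthonormal eigenbasis of $M$.

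For the forward direction, assume $M = N^T N$ for some real matrix $N$. Pick any eigenpair $(\lambda, v)$ of $M$ with $\|v\|_2 = 1$. I would then compute
$\lambda = \lambda\, v^T v = v^T M v = v^T N^T N v = \|Nv\|_2^2 \ge 0,$
which immediately gives nonnegativity of every eigenvalue of $M$.

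For the reverse direction, assume every eigenvalue of $M$ is nonnegative. By the spectral theorem there exists an orthogonal $Q$ with $M = Q \Lambda Q^T$, where $\Lambda = \mathrm{diag}(\lambda_1,\dots,\lambda_n)$ and $\lambda_i \ge 0$. Because each $\lambda_i \ge 0$, the entrywise square root $\Lambda^{1/2} = \mathrm{diag}(\sqrt{\lambda_1},\dots,\sqrt{\lambda_n})$ is a well-defined real diagonal matrix, and setting $N := \Lambda^{1/2} Q^T$ yields $N^T N = Q\, \Lambda^{1/2} \Lambda^{1/2}\, Q^T = Q \Lambda Q^T = M$, which is the desired factorization.

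The one genuine subtlety, and the step I would treat as the main obstacle, is the nonsingularity clause imposed on $N$. Strictly speaking, the $N$ I construct above is nonsingular if and only if every $\lambda_i$ is strictly positive, i.e.\ $M$ is positive definite rather than merely semidefinite; a PSD $M$ with a zero eigenvalue cannot possibly admit a factorization $M = N^T N$ with $N$ nonsingular, because such an $N$ would make $M$ invertible. I would therefore interpret the ``nonsingular'' requirement as implicitly restricting to the positive definite case (in which case my construction still works, as $\Lambda^{1/2}$ is then invertible), or alternatively relax ``nonsingular'' to ``real'' so that the semidefinite characterization goes through verbatim. Under either reading the algebraic chain $M = Q \Lambda Q^T = N^T N$ together with the identity $\lambda = \|Nv\|_2^2$ remains the heart of the argument.
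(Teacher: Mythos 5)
Your argument is correct and is the standard one; note that the paper itself offers no proof of this Fact---it is quoted as elementary linear algebra background---so there is no authorial proof to compare against. Your two directions (the identity $\lambda = v^T M v = \|Nv\|_2^2 \ge 0$ for an eigenpair, and the spectral-theorem construction $N = \Lambda^{1/2}Q^T$) are exactly what one would write. Your observation about the nonsingularity clause is a genuine and correct catch: if $N$ is nonsingular then $M=N^TN$ is invertible with all eigenvalues strictly positive, so the equivalence as literally stated fails for any singular PSD matrix, and the word ``nonsingular'' should be dropped (or the claim restricted to positive definite $M$). The same imprecision infects the first sentence of the Fact, which uses the strict inequality $x^TMx>0$---that is the definition of positive \emph{definiteness}, not semidefiniteness. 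Reassuringly, the only place the paper invokes this Fact is in the proof of Lemma~\ref{lem:projbound} (and its analogue in Lemma~\ref{lem:difFrobNorm}), where $M=(A-\tilde A)^T(A-\tilde A)$ with $A-\tilde A$ generally rectangular and certainly not assumed invertible; there only the direction ``$M=N^TN$ for a real $N$ implies all eigenvalues of $M$ are nonnegative'' is used, and your proof of that direction goes through verbatim without any nonsingularity hypothesis. So your relaxed reading is the right one and is the one the paper implicitly relies on.
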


\begin{fact}\label{fact:fact1}
 Let $A\in \R^{n \times d}$ and $U\Sigma V^T$ be the SVD of $A$. Then, the first $j$ columns of $V$ span a subspace that minimizes the sum of squares distances 
of the vectors in $A$ from all $j$-dimensional subspace, 
and this sum is $\Sigma_{i=j+1}^d \sigma_i^2$. Thus, for any $j$-dimensional subspace
represented by an orthonormal matrix
$X$, we have $||AX^{\perp}||^2_F\geq \Sigma_{i=j+1}^d \sigma_i^2. $
\end{fact}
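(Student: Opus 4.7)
The plan is to reduce the minimization of $\|AX^{\perp}\|_F^2$ over orthonormal $X \in \R^{d \times j}$ to a maximization of $\|AX\|_F^2$ via the Pythagorean theorem, and then to diagonalize the problem using the SVD so that the remaining optimization is over the diagonal entries of an orthogonal projector. First I would apply the Pythagorean decomposition stated in the excerpt, $\|A\|_F^2 = \|AX\|_F^2 + \|AX^{\perp}\|_F^2$, and note that $\|A\|_F^2 = \sum_{i=1}^d \sigma_i^2$ by Fact~\ref{fact:forbNormDef}. So it suffices to show the dual bound $\|AX\|_F^2 \le \sum_{i=1}^j \sigma_i^2$ for every orthonormal $X$, with equality when the columns of $X$ are the first $j$ right singular vectors of $A$.

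Next I would write $A = U\Sigma V^T$ and use Fact~\ref{fact:trace} together with $U^T U = I$ to rewrite $\|AX\|_F^2 = \tr(X^T A^T A X) = \tr(X^T V \Sigma^2 V^T X)$. Setting $Y := V^T X \in \R^{d \times j}$, orthogonality of $V$ gives $Y^T Y = X^T V V^T X = X^T X = I_j$, so $Y$ has orthonormal columns. Then by cyclicity of the trace,
\begin{equation*}
\|AX\|_F^2 \;=\; \tr(Y^T \Sigma^2 Y) \;=\; \sum_{i=1}^d \sigma_i^2 \, r_i, \qquad \text{where } r_i := \sum_{k=1}^j Y_{ik}^2 = (YY^T)_{ii}.
\end{equation*}

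The key step is to characterize the admissible coefficient vector $(r_1,\dots,r_d)$. Because $Y^T Y = I_j$, the matrix $P := YY^T$ is the orthogonal projector onto the $j$-dimensional column span of $Y$; equivalently, by Fact~\ref{fact:posSemiDef} both $P$ and $I - P$ are positive semidefinite, so $0 \le r_i = P_{ii} \le 1$. Moreover, $\sum_i r_i = \tr(YY^T) = \tr(Y^T Y) = j$. Thus maximizing $\sum_i \sigma_i^2 r_i$ reduces to the box-constrained linear program $0 \le r_i \le 1$, $\sum_i r_i = j$, with $\sigma_1^2 \ge \sigma_2^2 \ge \cdots \ge \sigma_d^2$; a standard rearrangement argument forces the maximum to be attained at $r_1 = \cdots = r_j = 1$, $r_{j+1} = \cdots = r_d = 0$, giving $\|AX\|_F^2 \le \sum_{i=1}^j \sigma_i^2$. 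Equality is achieved by $X = [v^{(1)}\,|\,\cdots\,|\,v^{(j)}]$, for which $Y = V^T X$ has the required structure.

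Combining these pieces yields $\|AX^{\perp}\|_F^2 = \|A\|_F^2 - \|AX\|_F^2 \ge \sum_{i=1}^d \sigma_i^2 - \sum_{i=1}^j \sigma_i^2 = \sum_{i=j+1}^d \sigma_i^2$, with equality on the top-$j$ right singular subspace, proving both claims of the fact. The only delicate point in the plan is verifying $r_i \in [0,1]$, i.e., that the diagonal of an orthogonal projector is bounded by $1$; beyond that, the argument is a routine diagonalization plus rearrangement.
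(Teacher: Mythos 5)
Your proposal is correct. The paper itself offers no proof of this statement -- it is listed in the preliminaries as a background fact from elementary linear algebra -- so there is nothing to compare against except the standard textbook argument, which is exactly what you give. The reduction via the Pythagorean identity to maximizing $\|AX\|_F^2$, the change of variables $Y=V^TX$ (which uses that $V$ is a full $d\times d$ orthogonal matrix, consistent with the paper's convention that there are $d$ singular values), the identity $\tr(Y^T\Sigma^2Y)=\sum_i\sigma_i^2 (YY^T)_{ii}$, and the linear program over the diagonal of the projector $YY^T$ are all sound. You correctly flag the one nontrivial verification, $0\le (YY^T)_{ii}\le 1$, and your justification (both $YY^T$ and $I-YY^T$ are positive semidefinite, the latter because a projector has eigenvalues in $\{0,1\}$, and diagonal entries of a PSD matrix are nonnegative) closes it; equivalently one could note that the rows of $Y$ extend to rows of a $d\times d$ orthogonal matrix and hence have squared norm at most $1$. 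The equality case at $X=[v^{(1)}|\cdots|v^{(j)}]$ is also handled, so both claims of the fact are established.
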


\begin{fact}\label{fact:projDecNorm}
  Let $M\in \R^{d\times l}$ be a matrix.
 Then, for an orthonormal matrix $X \in \R^{d\times k}$, due to  elementary linear 
 algebra we  have,  $||X{X}^TM||_F^2\leq ||M||_F^2.$
\end{fact}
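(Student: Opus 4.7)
The plan is to exploit the fact that, because $X$ has orthonormal columns (i.e.\ $X^TX = I_k$), the matrix $XX^T$ is the orthogonal projector onto the column span of $X$ in $\R^d$. Once this is recognized, the inequality is just a matrix form of the Pythagorean theorem applied columnwise to $M$, and can be verified cleanly through the trace identities already recorded in Fact~\ref{fact:forbNormDef} and Fact~\ref{fact:trace}.

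First, I would rewrite the left-hand side via the trace formulation of the Frobenius norm:
\[
\|XX^T M\|_F^2 = \tr\bigl((XX^TM)^T (XX^TM)\bigr) = \tr(M^T XX^T XX^T M).
\]
Using $X^TX = I_k$ this collapses to $\tr(M^T XX^T M)$. Next, I would use the decomposition $I_d = XX^T + (I_d - XX^T)$ and the linearity of trace from Fact~\ref{fact:trace} to split $\|M\|_F^2$:
\[
\|M\|_F^2 = \tr(M^T M) = \tr(M^T XX^T M) + \tr\bigl(M^T (I_d - XX^T) M\bigr).
\]
The first summand is exactly $\|XX^TM\|_F^2$ by the previous calculation.

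It remains to show the second summand is nonnegative. The matrix $I_d - XX^T$ is symmetric and idempotent (again using $X^TX = I_k$), so writing $N = (I_d - XX^T)M$ and using idempotence,
\[
\tr\bigl(M^T(I_d - XX^T)M\bigr) = \tr\bigl(M^T(I_d - XX^T)^T(I_d - XX^T)M\bigr) = \tr(N^T N) = \|N\|_F^2 \ge 0.
\]
This also aligns with the positive semidefinite characterization in Fact~\ref{fact:posSemiDef} applied to $I_d - XX^T$. Dropping this nonnegative term from the decomposition yields $\|XX^TM\|_F^2 \leq \|M\|_F^2$, as required.

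There is really no hard step; the only place to be careful is not to confuse $X^TX = I_k$ (which is what orthonormality of the columns gives) with $XX^T = I_d$ (which would only hold if $X$ were square). It is precisely the former, weaker identity that makes $XX^T$ a projection rather than the identity, which is what underlies the inequality being nontrivial.
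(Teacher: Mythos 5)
Your proof is correct: the identity $\|XX^TM\|_F^2 = \tr(M^TXX^TM)$ via $X^TX = I_k$, the split $\|M\|_F^2 = \tr(M^TXX^TM) + \tr(M^T(I_d-XX^T)M)$, and the observation that the second term equals $\|(I_d-XX^T)M\|_F^2 \ge 0$ are all sound. The paper states this fact without proof as elementary linear algebra, and your argument is exactly the standard justification it is implicitly relying on (and is consistent with how the paper itself uses $\|PW\|_F^2 \le \|P\|_F^2$ in the proof of Lemma~\ref{lem:projbound}), so there is nothing to add.
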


 In the following, we state the definitions of  subspace and projective clustering.
 
\begin{defi}[Subspace clustering]\label{defi:jsubspace}
 Let $A\in \R^{n\times d}$ and $j$ be an integer less than $d$. Then, the problem of $j$-subspace clustering is 
 to find a $j$-dimensional subspace $L$ of $\R^d$  that minimizes the $\dist^2(A, L).$ 
 In other words, the goal is to find a matrix $X^{\perp}\in \R^{d\times (d-j)}$ having orthonormal columns 
 that minimizes $||AX^{\perp}||_F^2$ over every such possible matrix $X^{\perp}$.
\end{defi}
\begin{defi}[linear (affine) $(k, j)$-projective clustering]\label{defi:kjsubspace}
 Let $A\in \R^{n\times d}$,  $j$ be an integer less than $d$, and $k$ be an integer less than $n$. 
 Then, the problem of linear (affine) $(k, j)$-projective clustering is to find a 
 closed set $\mathcal{C}$,  which is the union of $k$ linear (affine) subspaces $\{L_1, \ldots L_k\}$ 
 each of dimension at most $j$, such that  it minimizes the $\dist^2(A, \mathcal{C})$,  over every possible choice of $\mathcal{C}$.
 \end{defi}
\begin{theorem}[Low-rank approximation by \cite{sarlos06}]\label{theorem:randomProj}
 Let  $A \in \R^{n\times d}$,  and $\pi_{.}(.)$ denote the projection operators stated in the notation table. 
 If $\epsilon\in(0, 1]$ 
 and $\mathcal{S}$ is an $(r\times n)$ Johnson-Lindenstrauss matrix with i.i.d. zero-mean $\pm1$ entries and 
 $r=O\left(\frac{m}{\epsilon}\right)\log \frac{1}{\delta} $, then with probability 
 at least $1-\delta$ it holds that
  $$||A-\pi_{\mathcal{S}A, m}(A)||_F^2\leq (1+\epsilon)||A-A^{(m)}||_F^2.$$
Further, computing the singular vectors spanning $\pi_{\mathcal{S}A, m}(A)$ in
two passes 
\footnote{Two passes are required as we first multiply $A$ on the right with a Johnson-Lindenstrauss 
matrix $\mathcal{S}$,   and then we project the rows of $A$ 
again onto the row span of $\mathcal{S}A$.} 
over the data requires $O(\nnz(A)r+(n+d)r^2)$ time. 
\end{theorem}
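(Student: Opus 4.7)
The plan is to reduce the claim to a sketched-regression argument by exhibiting a single rank-$m$ matrix whose rows lie in the row span of $\mathcal{S}A$ and whose error is at most $(1+\epsilon)\|A-A^{(m)}\|_F^2$; since $\pi_{\mathcal{S}A,m}(A)$ is by definition the Frobenius minimizer over all such rank-$m$ matrices, the bound then transfers immediately. First I would fix the SVD $A=U\Sigma V^T$, write $V_m$ for the top-$m$ right singular vectors, and set $A^{(m)} = AV_mV_m^T$ with residual $A_\perp := A - A^{(m)}$, so that $\|A_\perp\|_F^2 = \|A - A^{(m)}\|_F^2$. The candidate I would use is
\[
B \;:=\; AV_m \cdot (\mathcal{S}AV_m)^+ \cdot \mathcal{S}A,
\]
which has rank at most $m$ and rows in the row span of $\mathcal{S}A$ as long as $\mathcal{S}AV_m$ has full column rank. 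Writing $B = AV_m X^\star$ with $X^\star$ the minimizer of $\|\mathcal{S}AV_m X - \mathcal{S}A\|_F$, I can recognize $B$ as the solution of the sketched multi-response regression whose unsketched optimum is $X^{\mathrm{opt}} = V_m^T$ (giving $AV_m V_m^T = A^{(m)}$ as expected).

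Second, by the standard sketched-regression recipe, if $\mathcal{S}$ satisfies (i) an $O(1)$-subspace embedding for the $m$-dimensional column span of $AV_m$ and (ii) the approximate matrix product $\|U_m^T \mathcal{S}^T \mathcal{S} A_\perp\|_F^2 \leq \epsilon \|A_\perp\|_F^2$ (where $U_m$ is an orthonormal basis for the column span of $AV_m$), then a normal-equations expansion plus a Pythagorean split yields $\|A - B\|_F^2 \leq (1+\epsilon)\|A - A^{(m)}\|_F^2$. The third step is then to verify that the $\pm 1$ sketch $\mathcal{S}$ with the stated $r = O((m/\epsilon + m\log m)\log(1/\delta))$ achieves both conditions: the subspace-embedding property for one fixed $m$-dimensional subspace follows from a standard $\epsilon$-net plus Hanson--Wright estimate and costs only $r \gtrsim m\log m + \log(1/\delta)$, while the approximate matrix product has per-trial second moment $O(m\|A_\perp\|_F^2/r)$ (by direct computation on the $\pm 1$ entries), so Markov plus independent amplification over $\log(1/\delta)$ repetitions suffices once $r \gtrsim (m/\epsilon)\log(1/\delta)$.

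Finally, for the runtime I would account for the two passes: pass one streams the nonzeros of $A$ to form $\mathcal{S}A \in \R^{r\times d}$ in $O(\nnz(A)\cdot r)$ time (each nonzero contributes $\pm$ to $r$ entries of a single column) and then orthogonalizes its rows in $O(r^2 d)$; pass two streams $A$ a second time to form $AR^T \in \R^{n\times r}$ in $O(\nnz(A)\cdot r)$, and takes a best rank-$m$ SVD of that $n\times r$ matrix in $O(nr^2)$, for a total of $O(\nnz(A) r + (n+d) r^2)$. The main obstacle I anticipate is keeping $r$ as tight as advertised: a naive invocation of an $\epsilon$-subspace embedding would force $r = \Omega(m/\epsilon^2)$, and to reach the stated linear-in-$1/\epsilon$ scaling one must decouple events (i) and (ii), running the embedding event at constant accuracy and only the matrix-product event at accuracy $\epsilon$, so that the $1/\epsilon$ enters only through the approximate matrix product — this is the heart of the Sarl\'os-style refinement.
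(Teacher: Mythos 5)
The paper does not prove this statement at all: it is imported verbatim as a black-box citation of Sarl\'os, so there is no in-paper argument to compare against. Your proposal is, in essence, a correct reconstruction of Sarl\'os's own proof --- exhibiting the candidate $B=AV_m(\mathcal{S}AV_m)^{+}\mathcal{S}A$ in the row span of $\mathcal{S}A$, reducing to sketched multi-response regression, and, crucially, decoupling the constant-accuracy subspace embedding from the $\epsilon$-accuracy approximate matrix product so that $r$ scales like $m/\epsilon$ rather than $m/\epsilon^2$ --- together with the correct two-pass runtime accounting. The only points worth tightening are that ``$\pi_{\mathcal{S}A,m}(A)$ is the Frobenius minimizer over rank-$m$ matrices with rows in $\mathrm{rowspan}(\mathcal{S}A)$'' needs a one-line Pythagorean justification rather than being true by definition, and that boosting the Markov-based matrix-product bound to probability $1-\delta$ by independent repetition requires a selection step (evaluating the residuals of the trials), which is how Sarl\'os handles it as well.
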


For our analysis, we will use a weak triangle inequality which is stated below:
\begin{lem}[Lemma $7.1$ of \cite{FSS13}]\label{lem:triangleIneq} For any $\varepsilon\in(0, 1)$, a closed set $\mathcal{C}$, and two points
$p, q\in \R^d$, we have
$$
 |\dist^2(p, \mathcal{C})-\dist^2(q, \mathcal{C})|\leq \frac{12||p-q||^2}{\varepsilon}+\frac{\varepsilon}{2}\dist^2(p, \mathcal{C}).
$$
\end{lem}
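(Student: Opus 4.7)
The plan is to reduce this squared-distance comparison to the Lipschitz property of the distance-to-$\mathcal{C}$ function and then absorb the resulting cross term via an AM--GM split tuned by the parameter $\varepsilon$. The distance function $\dist(\cdot,\mathcal{C})$ is $1$-Lipschitz by the ordinary triangle inequality, and the rest is a short algebraic manipulation; I do not expect a real obstacle here, only a careful choice of constants.

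First, I would observe that since $\mathcal{C}$ is closed, there exists $c_p\in\mathcal{C}$ with $\|p-c_p\|=\dist(p,\mathcal{C})$, and symmetrically $c_q\in\mathcal{C}$. The triangle inequality gives $\dist(q,\mathcal{C})\le\|q-c_p\|\le\|p-q\|+\dist(p,\mathcal{C})$ and the symmetric bound, so $|\dist(p,\mathcal{C})-\dist(q,\mathcal{C})|\le\|p-q\|$. I would also note the one-sided bound $\dist(q,\mathcal{C})\le\dist(p,\mathcal{C})+\|p-q\|$, which I use in the next step.

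Next, I would factor the difference of squares
\[
\bigl|\dist^2(p,\mathcal{C})-\dist^2(q,\mathcal{C})\bigr|
= \bigl|\dist(p,\mathcal{C})-\dist(q,\mathcal{C})\bigr|\cdot\bigl(\dist(p,\mathcal{C})+\dist(q,\mathcal{C})\bigr),
\]
bound the first factor by $\|p-q\|$ and the second by $2\dist(p,\mathcal{C})+\|p-q\|$ using the Lipschitz bound. This yields
\[
\bigl|\dist^2(p,\mathcal{C})-\dist^2(q,\mathcal{C})\bigr|\le 2\|p-q\|\cdot\dist(p,\mathcal{C})+\|p-q\|^2.
\]

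Finally, I would apply Young's inequality $2xy\le\tfrac{x^2}{\alpha}+\alpha y^2$ with $\alpha=\varepsilon/2$, $x=\|p-q\|$, $y=\dist(p,\mathcal{C})$ to get $2\|p-q\|\dist(p,\mathcal{C})\le \tfrac{2\|p-q\|^2}{\varepsilon}+\tfrac{\varepsilon}{2}\dist^2(p,\mathcal{C})$, and absorb the extra $\|p-q\|^2$ into the first term using $\varepsilon<1$ (so $\|p-q\|^2\le\|p-q\|^2/\varepsilon$). Summing gives a bound of the form $\tfrac{c\,\|p-q\|^2}{\varepsilon}+\tfrac{\varepsilon}{2}\dist^2(p,\mathcal{C})$ with a small absolute constant $c$; the stated constant $12$ is simply a convenient overestimate that leaves room in downstream applications. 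The only ``choice'' in the proof is the Young's-inequality parameter $\alpha$, and tuning it to match the allowed fraction $\varepsilon/2$ of $\dist^2(p,\mathcal{C})$ is the one step where care is required.
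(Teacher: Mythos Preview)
Your argument is correct. The paper does not supply its own proof of this lemma; it is quoted verbatim as Lemma~7.1 of \cite{FSS13} and used as a black box in the proof of Theorem~\ref{thm:thmmainKJ}. So there is nothing in the present paper to compare against beyond the statement itself.

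For the record, your chain of inequalities actually yields the sharper bound
\[
\bigl|\dist^2(p,\mathcal{C})-\dist^2(q,\mathcal{C})\bigr|
\le \frac{3\|p-q\|^2}{\varepsilon}+\frac{\varepsilon}{2}\dist^2(p,\mathcal{C}),
\]
so the constant $12$ in the quoted lemma is indeed slack, as you guessed. One cosmetic remark: you do not need the closest point $c_p$ to exist; the $1$-Lipschitz property of $x\mapsto\dist(x,\mathcal{C})$ follows directly from the triangle inequality at the level of infima, so the argument goes through for an arbitrary nonempty set $\mathcal{C}$ without invoking closedness.
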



\section{Faster coreset construction for subspace clustering}\label{sec:jSubspace}
In this section after revisiting the results of Cohen \textit{et al.}~\cite{CohenEMMP151},  
we present a randomized coreset construction for subspace clustering. The deterministic 
coreset construction of Feldman \textit{et al.}~\cite{FSS13} for subspace clustering 
problem relies on projecting the input matrix on its first few right singular vectors -- projecting 
the rows of $A$ on first few right singular vectors  of $A$ -- which requires SVD computation of $A$.
Cohen \textit{et al.}
~\cite{CohenEMMP151} suggested that projecting the rows of $A$ on 
some orthonormal vectors that closely approximate the right singular 
vectors of $A$ (obtained via e.g.~\cite{sarlos06}) 
  also satisfies the required properties of coreset \textit{w.h.p.}, and as a consequence, gives a faster coreset construction. 
\begin{theorem}[Adapted from Theorem $8$ of~\cite{CohenEMMP151}]\label{theorem:jSubspaceCoreset}
Let  $X  \in \R^{d\times j}$ be an orthonormal matrix representing a subspace $L$, let
  $X^{\perp} \in \R^{d\times (d-j)}$ be the orthonormal matrix  representing the orthogonal complement of
 $L$, $\epsilon \in (0,1)$, $\delta \in (0,1)$, $m=\lceil\frac{j}{\epsilon}\rceil$, $\Delta=||A-A^{(m)}||_F^2$, and $\tilde{A}$ is 
a rank $m$ approximation of $A$ satisfying Theorem~\ref{theorem:randomProj}. Then, the following is true 
with probability at least $1-\delta$:
     $$0\leq \left|||\tilde{A}X^{\perp}||_F^2+\Delta-||AX^{\perp}||_F^2\right| \leq 2\epsilon||AX^{\perp}||_F^2.$$
\end{theorem}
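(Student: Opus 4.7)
The plan is to express $\|\tilde{A}X^\perp\|_F^2 + \Delta - \|AX^\perp\|_F^2$ via the Pythagorean theorem as a difference of two easy-to-control pieces, and then use the hypothesis $m=\lceil j/\epsilon\rceil$ to show each piece is bounded in absolute value by $O(\epsilon)\|AX^\perp\|_F^2$.

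First, I would apply Fact~\ref{fact:fact2} three times: to $A$ relative to $X,X^\perp$, to $\tilde{A}$ relative to $X,X^\perp$, and to the orthogonal decomposition $A = \tilde{A} + (A - \tilde{A})$, which is valid row-wise since each row of $\tilde{A}$ is the orthogonal projection of the corresponding row of $A$ onto the rank-$m$ row span of $\tilde{A}$. Combining the three identities yields
\[
\|\tilde{A}X^\perp\|_F^2 + \Delta - \|AX^\perp\|_F^2 \;=\; \bigl(\|AX\|_F^2 - \|\tilde{A}X\|_F^2\bigr) \;-\; \bigl(\|A-\tilde{A}\|_F^2 - \Delta\bigr).
\]
By the optimality of $A^{(m)}$ as the best rank-$m$ approximation of $A$ and by Theorem~\ref{theorem:randomProj}, the second bracket lies in $[0,\epsilon\Delta]$, and Fact~\ref{fact:fact1} applied with $m\ge j$ gives $\Delta = \sum_{i>m}\sigma_i^2 \le \|AX^\perp\|_F^2$; so this bracket already contributes at most $\epsilon\|AX^\perp\|_F^2$ to the quantity of interest.

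The bulk of the work lies in bounding $\bigl|\|AX\|_F^2 - \|\tilde{A}X\|_F^2\bigr|$. Setting $Y = A - \tilde{A}$ and expanding $AX = \tilde{A}X + YX$ gives
\[
\|AX\|_F^2 - \|\tilde{A}X\|_F^2 \;=\; \|YX\|_F^2 \;+\; 2\,\tr\bigl(X^T \tilde{A}^T Y X\bigr).
\]
The driving intuition, and the reason the hypothesis $m=\lceil j/\epsilon\rceil$ is needed, is that since $\tilde{A}$ is a near-optimal rank-$m$ approximation of $A$, the residual $Y$ essentially lives in the bottom $d-m$ singular directions of $A$ and therefore has at most a $\tfrac{j}{m-j+1}=O(\epsilon)$ fraction of its mass inside any $j$-dimensional subspace. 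In the ideal case $V = V_A^{(m)}$ (the span of the top $m$ right singular vectors of $A$, so that $\tilde{A} = A^{(m)}$), a direct SVD computation gives $\|YX\|_F^2 \le j\sigma_{m+1}^2 \le \tfrac{j}{m-j+1}\|AX^\perp\|_F^2 = O(\epsilon)\|AX^\perp\|_F^2$, using Fact~\ref{fact:fact1} via $\|AX^\perp\|_F^2 \ge \sum_{i=j+1}^{m+1}\sigma_i^2 \ge (m-j+1)\sigma_{m+1}^2$; moreover, the cross term vanishes identically because $\tilde{A}^T Y = VV^T A^T A(I - VV^T) = 0$ whenever $V$ spans an invariant subspace of $A^T A$.

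The main obstacle is the general (non-ideal) case, where $V$ is only approximately the top-$m$ singular subspace of $A$: the cross term no longer vanishes, and the ideal bound on $\|YX\|_F^2$ is not immediate. Here one has to use the near-optimality guarantee $\|A - \tilde{A}\|_F^2 \le (1+\epsilon)\Delta$ together with Cauchy--Schwarz for the Frobenius inner product, $|\tr(X^T \tilde{A}^T YX)| \le \|\tilde{A}X\|_F\,\|YX\|_F$, in a perturbation-style argument to show that each of the two terms remains $O(\epsilon)\|AX^\perp\|_F^2$ up to constants; this is essentially the content of the proof of Theorem~$8$ in~\cite{CohenEMMP151}, which I would follow. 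Combining both brackets and (if necessary) rescaling $\epsilon$ by a constant at the outset yields the stated bound $|\,\cdot\,| \le 2\epsilon\|AX^\perp\|_F^2$.
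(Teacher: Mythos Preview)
Your Pythagorean decomposition is exactly the paper's: both rewrite the target quantity as $(\|AX\|_F^2-\|\tilde AX\|_F^2)-(\|A-\tilde A\|_F^2-\Delta)$ (the paper writes the second bracket as $\|A^{(m)}\|_F^2-\|\tilde A\|_F^2$, which is the same thing), bound the second bracket by $[0,\epsilon\Delta]\subseteq[0,\epsilon\,\|AX^\perp\|_F^2]$ via optimality of $A^{(m)}$ and Fact~\ref{fact:fact1}, and reduce everything to bounding $\|AX\|_F^2-\|\tilde AX\|_F^2$, which is precisely the paper's Lemma~\ref{lem:projbound}. So at the structural level your plan and the paper's coincide.

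The gap is in your sketch for the general (non-ideal) case of that key bound. The Cauchy--Schwarz estimate $|\tr(X^T\tilde A^T YX)|\le\|\tilde AX\|_F\,\|YX\|_F$ is too weak to close the argument: $\|\tilde AX\|_F$ can be as large as $\|A\|_F$, so even with a sharp bound $\|YX\|_F=O(\sqrt{\epsilon})\,\|AX^\perp\|_F$ you only get $O(\sqrt{\epsilon})\,\|A\|_F\,\|AX^\perp\|_F$, not $O(\epsilon)\,\|AX^\perp\|_F^2$, and a ``perturbation-style'' tweak does not recover the missing factor. The paper (following Cohen \textit{et al.}) never splits off the cross term. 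Instead it writes $\|AX\|_F^2-\|\tilde AX\|_F^2=\tr\bigl(XX^T(A^TA-\tilde A^T\tilde A)\bigr)$, bounds this trace directly by the sum of the top $j$ eigenvalues of $A^TA-\tilde A^T\tilde A$, identifies that sum with $\|(A-\tilde A)_j\|_F^2$, and then uses the rank inequality $\mathrm{rank}\bigl(\tilde A+(A-\tilde A)_j\bigr)\le m+j$ together with $\|A-\tilde A\|_F^2\le(1+\epsilon)\Delta$ and $m=\lceil j/\epsilon\rceil$ to conclude $\|(A-\tilde A)_j\|_F^2\le 2\epsilon\,\|AX^\perp\|_F^2$. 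Your ideal-case computation is exactly the $\tilde A=A^{(m)}$ specialization of this rank argument; the general case is handled by the \emph{same} rank argument, not by Cauchy--Schwarz on the cross term.
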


\begin{proof}
 
 Using a result of  Sarl{\'{o}}s~\cite{sarlos06}, we get a rank $m$ approximation of $A$. If 
$\mathcal{S}$ is an $(r\times n)$ JL matrix, where $r=O\left((\frac{m}{\epsilon}+m\log m)\log \frac{1}{\delta} \right)$
(see Theorem~\ref{theorem:randomProj})
then the following is true with probability at least $1-\delta$:
 \begin{equation}\label{eq:sarlos06}
 ~~~~~~~~~~~~||A-\pi_{\mathcal{S}A, m}(A)||_F^2\leq (1+\epsilon)||A-A^{(m)}||_F^2.
\end{equation}
Here, $A^{(m)}$ is the best $m$ rank approximation of $A$. 
Let $R'$ be the matrix having the first $m$ right singular vectors 
of $\pi_{\mathcal{S}A}(A)$, and let we denote  $AR'R'^T$ by $\tilde{A}$,  
then by Equation~\ref{eq:sarlos06}, the following holds true with probability at least $1-\delta$:
 \begin{align*}
 ||A-\tilde{A}||^2_F &\leq (1+\epsilon) ||A-A^{(m)}||^2_F \numberthis\label{eq:randomProjEq}
 \end{align*}

In the following we  show an upper bound on the following expression:
\begin{align*}
& \left|||\tilde{A}X^{\perp}||_F^2+\Delta-||AX^{\perp}||_F^2\right|\\
&=\left|||\tilde{A}||_F^2-||\tilde{A}X||_F^2+||A-A^{(m)}||_F^2-||A||_F^2+||AX||_F^2\right| \numberthis\label{eq:eq24}\\
&=\left|||\tilde{A}||_F^2-||\tilde{A}X||_F^2+||A||_F^2-||A^{(m)}||_F^2-||A||_F^2+||AX||_F^2\right|\numberthis\label{eq:eq924}\\
&=\left|||\tilde{A}||_F^2-||A^{(m)}||_F^2-||\tilde{A}X||_F^2+||AX||_F^2\right|\\
&\leq \left|||A^{(m)}||_F^2-||A^{(m)}||_F^2+||AX||_F^2- ||\tilde{A}X||_F^2\right|\numberthis\label{eq:eq1094}\\
&= \left|||AX||_F^2- ||\tilde{A}X||_F^2\right|
\leq 2\epsilon||AX^{\perp}||_F^2 \numberthis\label{eq:upperboundJ}
 \end{align*}
   Equality~\ref{eq:eq24} follows from Pythagorean theorem;  
   Equality~\ref{eq:eq924} follows from Fact~\ref{fact:fact2}, where $A^{(m)}=AV'V'^T$, and $V'\in\R^{d\times m}$ 
  having $m$ columns from the first $m$ right singular vectors of $A$;
  Inequality~\ref{eq:eq1094} holds as the value of  $||\tilde{A}||_F^2-||A^{(m)}||_F^2$ can be at most zero, because 
    at the best we can hope to sample the right singular vectors of $A$ as $R'$, which maximizes the value of the desired expression; 
  finally Inequality~\ref{eq:upperboundJ} holds  
   from Lemma~\ref{lem:projbound}. 
 \end{proof}
 A proof of the following lemma follows from the analysis of Lemma 5 of~\cite{CohenEMMP151}.    
 We defer its proof in the appendix. 
\begin{lem}[Adapted from Lemma $5$ and Theorem $8$ of~\cite{CohenEMMP151}]\label{lem:projbound}
 Let $A\in \R^{n\times d}$,  $\tilde{A}$ is a rank $m$ approximation of $A$ satisfying Equation~\ref{eq:randomProjEq}, then
 $$
  0\leq ||AX||_F^2- ||\tilde{A}X||_F^2 \leq 2\epsilon||AX^{\perp}||_F^2.
$$
\end{lem}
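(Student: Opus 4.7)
The plan is to decompose the desired quantity through the exact best rank-$m$ approximation $A^{(m)} = AV^{(m)}V^{(m)T}$ of $A$, where $V^{(m)}$ collects the top $m$ right singular vectors of $A$, to exploit SVD-based orthogonality, and then to use Equation~\ref{eq:randomProjEq} to transfer bounds from $A^{(m)}$ to $\tilde{A}$. The first step is a ``Pythagorean'' identity valid for the exact SVD truncation: since $V^{(m)}$ spans an invariant subspace of $A^T A$, one has $(A^{(m)})^T(A - A^{(m)}) = 0$, and a direct calculation gives
\begin{equation*}
\|AX\|_F^2 \;=\; \|A^{(m)}X\|_F^2 + \|(A - A^{(m)})X\|_F^2.
\end{equation*}
Consequently
\begin{equation*}
\|AX\|_F^2 - \|\tilde{A}X\|_F^2 \;=\; \|(A - A^{(m)})X\|_F^2 \;+\; \bigl(\|A^{(m)}X\|_F^2 - \|\tilde{A}X\|_F^2\bigr),
\end{equation*}
and I would bound each summand separately.

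For the first summand I would expand with the SVD $A = U\Sigma V^T$ to get $\|(A - A^{(m)})X\|_F^2 = \sum_{i>m}\sigma_i^2 \|v_i^T X\|^2$. Since $X$ has $j$ orthonormal columns, $\|v_i^T X\|^2 \le 1$ and $\sum_i \|v_i^T X\|^2 = j$, so the sum is at most $j\,\sigma_{m+1}^2$. The averaging bound $\sigma_{m+1}^2 \le \frac{1}{m-j}\sum_{i=j+1}^m \sigma_i^2$, together with Fact~\ref{fact:fact1} (which gives $\sum_{i=j+1}^m \sigma_i^2 \le \|AX^\perp\|_F^2$) and the choice $m = \lceil j/\epsilon\rceil$, then yields $\|(A - A^{(m)})X\|_F^2 \le \frac{\epsilon}{1-\epsilon}\|AX^\perp\|_F^2 \le 2\epsilon\|AX^\perp\|_F^2$, simultaneously establishing non-negativity and (half of) the upper bound for this summand.

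For the second summand I would apply Pythagoras to $A^{(m)}$ and to $\tilde{A}$ (each of which is a projection of the rows of $A$ onto a rank-$m$ subspace) to rewrite
\begin{equation*}
\|A^{(m)}X\|_F^2 - \|\tilde{A}X\|_F^2 \;=\; \bigl(\|A^{(m)}\|_F^2 - \|\tilde{A}\|_F^2\bigr) \;-\; \bigl(\|A^{(m)}X^\perp\|_F^2 - \|\tilde{A}X^\perp\|_F^2\bigr).
\end{equation*}
The first bracket lies in $[0,\epsilon\|A - A^{(m)}\|_F^2]$: non-negativity comes from the SVD optimality $\|\tilde{A}\|_F^2 \le \|A^{(m)}\|_F^2$ already exploited in the proof of Theorem~\ref{theorem:jSubspaceCoreset}, and the upper bound follows by combining Equation~\ref{eq:randomProjEq} with the Pythagorean identity $\|A\|_F^2 - \|\tilde{A}\|_F^2 = \|A - \tilde{A}\|_F^2$. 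The second bracket I would attack by re-running the SVD-plus-averaging argument with $X^\perp$ in place of $X$, together with the observation that $\tilde{A}$ is close to $A^{(m)}$ by Equation~\ref{eq:randomProjEq}, producing a bound of order $\epsilon\|AX^\perp\|_F^2$. Assembling the pieces then gives the claimed lower bound $0$ and upper bound $2\epsilon\|AX^\perp\|_F^2$.

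The hardest part I expect is controlling this second summand: $\|A^{(m)}X\|_F^2 - \|\tilde{A}X\|_F^2$ is not sign-determined by generic properties of rank-$m$ projections alone, and the naive Cauchy--Schwarz estimate yields a term of order $\|A - \tilde{A}\|_F \cdot \|AX\|_F$, which is far too loose. Obtaining the correct $\epsilon\|AX^\perp\|_F^2$ scale requires carefully combining the near-optimality guarantee of $\tilde{A}=AR'R'^T$ with the averaging of singular values that gave the first summand its bound, so that the slack between $\tilde{A}$ and $A^{(m)}$ gets charged to the tail $\sum_{i > j}\sigma_i^2$ rather than to $\|A\|_F^2$.
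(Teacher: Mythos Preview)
Your decomposition through the exact truncation $A^{(m)}$ is a natural first move, but it leaves a genuine gap that your own last paragraph already flags and does not close. The second summand $\|A^{(m)}X\|_F^2 - \|\tilde{A}X\|_F^2$ is not sign-determined, so nothing in your outline establishes the lower bound $\ge 0$ for the full expression; you only show the first summand is nonnegative. And your plan for the ``second bracket'' $\|A^{(m)}X^\perp\|_F^2 - \|\tilde{A}X^\perp\|_F^2$, namely ``re-run the SVD-plus-averaging argument with $X^\perp$ in place of $X$'', cannot work as stated: $X^\perp$ has $d-j$ columns, so the step that gave you $\sum_i \|v_i^T X\|^2 \le j$ now gives $d-j$, and the averaging produces a bound of the wrong order entirely. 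You also have not said how ``$\tilde{A}$ close to $A^{(m)}$'' would be made quantitative here; Equation~\ref{eq:randomProjEq} controls $\|A-\tilde A\|_F^2$, not $\|A^{(m)}-\tilde A\|_F^2$ or the individual projected pieces.

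The paper (following Cohen \emph{et al.}) avoids this split altogether. It writes
\[
\|AX\|_F^2 - \|\tilde{A}X\|_F^2 \;=\; \tr\bigl(XX^T M\bigr),\qquad M := A^TA - \tilde{A}^T\tilde{A},
\]
and uses that $\tilde{A}=AR'R'^T$ is an orthogonal projection of the rows of $A$ to obtain $M=(A-\tilde{A})^T(A-\tilde{A})$. This makes $M$ positive semidefinite, giving the lower bound at once, and identifies $\lambda_i(M)=\sigma_i^2(A-\tilde{A})$. Since $XX^T$ is a rank-$j$ projector with $0\le w_i^T XX^T w_i\le 1$ and $\sum_i w_i^T XX^T w_i\le j$, one gets
\[
\tr(XX^T M)\;\le\;\sum_{i=1}^{j}\lambda_i(M)\;=\;\|(A-\tilde{A})_j\|_F^2 .
\]
The upper bound is then a rank argument you never reach: $\tilde{A}+(A-\tilde{A})_j$ has rank at most $m+j$, so $\|A-\tilde{A}\|_F^2-\|(A-\tilde{A})_j\|_F^2\ge \sum_{i>m+j}\sigma_i^2(A)$; combined with Equation~\ref{eq:randomProjEq} this gives $\|(A-\tilde{A})_j\|_F^2\le \sum_{i=m+1}^{m+j}\sigma_i^2+\epsilon\sum_{i>m}\sigma_i^2$, and the same $j\sigma_{m+1}^2\le \epsilon\sum_{i>j}\sigma_i^2$ averaging you used finishes. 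The missing idea in your proposal is precisely this: bound by the top-$j$ piece of the residual $A-\tilde{A}$ directly, rather than routing through $A^{(m)}$ and then trying to compare $A^{(m)}$ with $\tilde{A}$.
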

\section{Faster coreset construction for projective clustering}\label{sec:kjSubspace}
In this section, extending the result (Theorem~\ref{theorem:jSubspaceCoreset}) of the previous section, we present  
 a randomized coreset construction for the problem of projective clustering. 
More precisely, if  $L_1,..., L_k$ be a set of $k$ subspaces each of dimension at most $j$, and let $\mathcal{C}$ 
 be a closed set containing union of them, then our  randomized coreset is a matrix of very small rank 
 (independent of $d$) and it approximately preserves the distances  from every such closed set $\mathcal{C}$, 
 with high probability. Our main contribution is the dimensionality  
 reduction step of  the  coreset construction, which is presented in Algorithm $1$ below.\\
\vspace{-1cm}
\begin{algorithm}\label{algorithm:AlgodimredKJsubspace}
~\KwInput{$A\in \R^{n\times d}$, an integer $1\leq j<d-1$, and an integer $1\leq k<n-1$ such that 
$ j^*\leq d-1$,   
where $j^*=k(j+1)$,   $\epsilon\in(0, 1)$, $\delta \in (0, 1)$.}
~\KwResult{Dimensionality reduction for randomized coreset construction for the projective clustering.}
~Compute  an  Johnson-Lindenstrauss matrix $[\mathcal{S}]_{r\times n}$ having \textit{i.i.d.} $\pm1$ entries and zero-mean,
where $r=O(\frac{m^*}{\epsilon})\log \frac{1}{\delta}$, 
$m^*=\lceil\frac{52j^*}{\epsilon^2}\rceil$.\\
~Compute the matrix $\pi_{\mathcal{S}A}(A)$.\\

~Compute the SVD of  $\pi_{\mathcal{S}A}(A)$, let $R^*\in \R^{d\times m^* }$ be 
the first $m^*$ right singular vectors of $\pi_{\mathcal{S}A}(A)$.\\

~Let us denote $AR^*{R^*}^T$ by $A^*$, and output $A^*$.
  \caption{Dimensionality reduction for  projective clustering.}
\end{algorithm}\\
\textbf{Proof of Theorem~\ref{thm:thmmainKJ}: }
Let $[{X^*}]_{d\times j^*}$ be a matrix with orthonormal columns whose span is $L^*$, 
 and let ${L^*}^{\perp}$ be the orthogonal complement of $L^*$ spanned by  $[{{X^*}^{\perp}}]_{d\times (d-j^*)}$. 
 If $\mathcal{C}$ is a closed set spanned by $L^*$, then due to the Pythagorean theorem, we have, 
 $\dist^2(A, \mathcal{C})=||A{{X^*}^{\perp}}||^2_F+\dist^2(A{X^*}{X^*}^T, \mathcal{C})$. Further, 
 \begin{align*}
 &\left|\left(\dist^2(A^*, \mathcal{C})+\Delta^*\right)-\dist^2(A, \mathcal{C})\right|\\
 &= \left|\left(||A^*{{X^*}^{\perp}}||^2_F+\dist^2(A^*{X^*}{X^*}^T, \mathcal{C}) +\Delta^*\right)-\left(||A{{X^*}^{\perp}}||^2_F+\dist^2(A{X^*}{X^*}^T, \mathcal{C})\right)\right|\\
&\leq\underbrace{\left|\left(||A^*{{X^*}^{\perp}}||^2_F+\Delta^*-||A{{X^*}^{\perp}}||^2_F\right)\right|}_\text{first term }+\underbrace{\left|\left(\dist^2(A^*{X^*}{X^*}^T, \mathcal{C})-\dist^2(A{X^*}{X^*}^T, \mathcal{C}))\right)\right|}_\text{second term}\nonumber
\end{align*}
     We have to bound two terms  in the above expression. The first term can be upper bounded using a similar 
    analysis as of Theorem~\ref{theorem:jSubspaceCoreset}
    which holds     true with probability at least $1-\delta$. 
    (In Theorem~\ref{theorem:jSubspaceCoreset}, we replace $j$ by $j^*, m$ by $m^*, \epsilon$ by $\frac{\epsilon^2}{52}$, 
    and $\Delta$ by $\Delta^*$.)
    \begin{equation}\label{eq:equation106}
   \left|||A^*{{X^*}^{\perp}}||^2_F+\Delta^*-||A{{X^*}^{\perp}}||^2_F\right|\leq  \frac{\epsilon^2}{26}||A{{X^*}^{\perp}}||^2_F
 \end{equation}
To   bound the second term $\left|\dist^2(A^*{X^*}{X^*}^T, \mathcal{C})-\dist^2(A{X^*}{X^*}^T, \mathcal{C}))\right|$,
 we   use   Lemma~\ref{lem:triangleIneq}. 
For any $\varepsilon\in (0, 1)$ and due to Lemma~\ref{lem:triangleIneq}, we have
\begin{align*}
&\left|\dist^2(A^*{X^*}{X^*}^T, \mathcal{C})-\dist^2(A{X^*}{X^*}^T, \mathcal{C}))\right|\\
 &\leq \frac{12}{\varepsilon}||A^*{X^*}{X^*}^T-A{X^*}{X^*}^T||^2_F+\frac{\varepsilon}{2}\dist^2(A{X^*}{X^*}^T, \mathcal{C})\\
&\leq \frac{12}{\varepsilon}\left( \frac{\epsilon^2}{26}||A{X^*}^{\perp}||^2_F\right)+\frac{\varepsilon}{2}\dist^2(A{X^*}{X^*}^T, \mathcal{C})\numberthis\label{eq:eq500}\\
&\leq \frac{12}{\varepsilon}\left( \frac{\epsilon^2}{26}||A{X^*}^{\perp}||^2_F \right)+\frac{\varepsilon}{2}\dist^2(A, \mathcal{C})\nonumber
\end{align*}
Inequality~\ref{eq:eq500} holds  due to Lemma~\ref{lem:difFrobNorm}.
Thus, we have
\begin{align*}
&\left|\dist^2(A^*{X^*}{X^*}^T, \mathcal{C})-\dist^2(A{X^*}{X^*}^T, \mathcal{C}))\right|
\leq \frac{12}{\varepsilon}\left( \frac{\epsilon^2}{26}||A{X^*}^{\perp}||^2_F\right)+\frac{\varepsilon}{2}\dist^2(A, \mathcal{C})\numberthis\label{eq:equation107}
\end{align*}


 Equation~\ref{eq:equation106},  
    in conjunction with Equation~\ref{eq:equation107}, gives us the following:
\begin{align*}
&\left|\left(\dist^2(A^*, \mathcal{C})+\Delta^*\right)-\dist^2(A, \mathcal{C})\right|\\
&\leq \left(1+\frac{12}{\varepsilon}\right)\frac{\epsilon^2}{26}||A{{X^*}^{\perp}}||^2_F+\frac{\varepsilon}{2}\dist^2(A, \mathcal{C})\\
&\leq\left(1+\frac{12}{\varepsilon}\right)\frac{\epsilon^2}{26}\dist^2(A, \mathcal{C})+\frac{\varepsilon}{2}\dist^2(A, \mathcal{C})\\
&=\left(\frac{\epsilon^2}{26}+\frac{12\epsilon^2}{26\varepsilon}+\frac{\varepsilon}{2}\right)\dist^2(A, \mathcal{C})\\
&=\left(\frac{\epsilon^2}{26}+\frac{12\epsilon}{26}+\frac{\epsilon}{2}\right)\dist^2(A, \mathcal{C})\numberthis\label{eq:eq5291}\\
&\leq \epsilon \dist^2(A, \mathcal{C})
\end{align*}
Equality~\ref{eq:eq5291} holds by choosing $\varepsilon=\epsilon$, and as $\epsilon^2/26+12\epsilon/26<\epsilon/2$.

 A proof of the following lemma is presented in the appendix. 
\begin{lem}\label{lem:difFrobNorm}
 Let ${X^*} \in \R^{d\times j^*}$ be a matrix with orthonormal 
columns whose span is $L^*$, then in Algorithm $1$ 
the following is true with probability at least $1-\delta$
$$
  ||A^*{X^*}{X^*}^T-A{X^*}{X^*}^T||^2_F\leq 
\frac{\epsilon^2}{26}||A{{X^*}}^{\perp}||^2_F.
$$
 \end{lem}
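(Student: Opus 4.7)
The plan is to adapt the trace--eigenvalue argument underlying Lemma~\ref{lem:projbound}, with $j$ replaced by $j^*=k(j+1)$, $m$ replaced by $m^*=\lceil 52j^*/\epsilon^2\rceil$, and the Sarl\'os JL parameter set to $\epsilon^2/52$ as chosen in Algorithm~1. By Fact~\ref{fact:fact3}, $\lVert(A-A^*)X^*X^{*T}\rVert_F^2=\lVert(A-A^*)X^*\rVert_F^2$, and I rewrite this as $\tr(PM)$ where $P:=X^*X^{*T}$ is the rank-$j^*$ orthogonal projector and $M:=(A-A^*)^T(A-A^*)$ is automatically positive semidefinite as a Gram matrix (sidestepping the PSD verification needed in Lemma~\ref{lem:projbound}).

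The first step is the standard spectral bound on $\tr(PM)$: diagonalize $M=\sum_i\lambda_i(M)\,w_iw_i^T$ with eigenvalues in decreasing order, note that each $w_i^T P w_i\in[0,1]$ with $\sum_i w_i^T P w_i=\tr(P)=j^*$, and conclude
\[
\tr(PM) \;\leq\; \sum_{i=1}^{j^*}\lambda_i(M)\;=\;\sum_{i=1}^{j^*}\sigma_i^2(A-A^*)\;=\;\lVert(A-A^*)^{(j^*)}\rVert_F^2,
\]
which reduces the problem to bounding the top-$j^*$ squared singular values of the residual $A-A^*$.

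Next I would use the Eckart--Young inequality applied to the rank-$(m^*+j^*)$ matrix $A^*+(A-A^*)^{(j^*)}$ to obtain $\lVert A-A^*-(A-A^*)^{(j^*)}\rVert_F^2\geq\sum_{i=m^*+j^*+1}^d \sigma_i^2(A)$; combine this with the Pythagorean identity $\lVert A-A^*-(A-A^*)^{(j^*)}\rVert_F^2=\lVert A-A^*\rVert_F^2-\lVert(A-A^*)^{(j^*)}\rVert_F^2$ and the Sarl\'os guarantee $\lVert A-A^*\rVert_F^2\leq(1+\epsilon^2/52)\sum_{i=m^*+1}^d\sigma_i^2(A)$ from Equation~\ref{eq:randomProjEq} to arrive at
\[
\lVert(A-A^*)^{(j^*)}\rVert_F^2\;\leq\; \sum_{i=m^*+1}^{m^*+j^*}\sigma_i^2(A)\;+\;\frac{\epsilon^2}{52}\sum_{i=m^*+1}^d\sigma_i^2(A).
\]
Each right-hand term is then a small multiple of $\lVert AX^{*\perp}\rVert_F^2$: the second is at most $\tfrac{\epsilon^2}{52}\lVert AX^{*\perp}\rVert_F^2$ since $\sum_{i=m^*+1}^d\sigma_i^2(A)\leq\sum_{i=j^*+1}^d\sigma_i^2(A)\leq\lVert AX^{*\perp}\rVert_F^2$ by Fact~\ref{fact:fact1}, while the first satisfies $\sum_{i=m^*+1}^{m^*+j^*}\sigma_i^2(A)\leq j^*\sigma_{m^*+1}^2(A)\leq\tfrac{j^*}{m^*-j^*}\lVert AX^{*\perp}\rVert_F^2$ via the averaging inequality $(m^*-j^*)\sigma_{m^*+1}^2(A)\leq\sum_{i=j^*+1}^{m^*}\sigma_i^2(A)$. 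With $m^*=\lceil 52j^*/\epsilon^2\rceil$ and $\epsilon\in(0,1)$ the two contributions combine to the claimed $\tfrac{\epsilon^2}{26}\lVert AX^{*\perp}\rVert_F^2$.

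The main obstacle is the constant chasing rather than any new idea: the trace--eigenvalue step and the Eckart--Young--Sarl\'os sandwich are standard, but forcing the specific numerical constant $1/26$ requires a careful balance between the rank parameter $m^*$ and the JL distortion parameter, which is exactly why Algorithm~1 takes $m^*=\lceil 52j^*/\epsilon^2\rceil$ and sets the Sarl\'os parameter to $\epsilon^2/52$.
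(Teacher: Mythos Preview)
Your proposal is correct and is essentially the paper's own argument: the paper also rewrites the quantity as $\tr\bigl(X^*X^{*T}(A-A^*)^T(A-A^*)\bigr)$ and then reruns the trace--eigenvalue/Eckart--Young/Sarl\'os chain of Lemma~\ref{lem:projbound} with $j\mapsto j^*$, $m\mapsto m^*$ and $\epsilon\mapsto\epsilon^2/52$. Your version is marginally cleaner in that you take $M=(A-A^*)^T(A-A^*)$ directly as a Gram matrix (hence PSD for free), whereas the paper first passes through $A^TA-{A^*}^TA^*$ and then verifies this equals $(A-A^*)^T(A-A^*)$ as in Equation~\ref{eq:eq1802}; otherwise the two proofs coincide.
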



  \begin{remark}
   Please note that it is  sufficient to store the matrix $AR^*$ which is of 
   dimension $m^*$, where $m^*=O\left({k(j+1)}/{\epsilon^2}\right)$. However, for the
purpose of our analysis, we use the matrix $AR^*{R^*}^T$ 
   which is of dimension $d$, and rank $m^*$. Further, the space  that is required to store our
   coreset is $O(nm^*+1)$ -- we need $O(nm^*)$ space to store the matrix $AR^*$, and 
   $O(1)$ space to store the term $\Delta^*$; on the other hand, 
the space require to store $A$ is $O(nd).$
  \end{remark}

\noindent\textbf{Comparison with coreset construction of~\cite{FSS13}:}
  Coreset construction of~\cite{FSS13}  requires projecting the rows of $A$ on its first $O(k(j+1)/\epsilon^2)$ 
  right singular vectors which gives 
 a matrix of rank  $O(k(j+1)/\epsilon^2)$ and it approximately preserves the 
distance from any closed $\mathcal{C}$. Their construction requires computing SVD of the given matrix $A$, 
 which has the run-time complexity of
 $\min\{n^2d, nd^2\}$. 
In our construction, we  showed that it is also sufficient  to 
project the rows of $A$ on $O(k(j+1)/\epsilon^2)$ othronormal vectors 
that closely approximate the right singular vectors of $A$. 
  We now give  an   time bound on the running time of Algorithm $1$. 
 Time required for execution of line number $3, 4, 5$ is 
 \begin{align*}
 &O\left(\nnz(A) \frac{m^*}{\epsilon} +(n + d)\left(\frac{m^*}{\epsilon}\right)^2\right) 
 =O\left(\nnz(A) \frac{j^*}{\epsilon^3} +(n + d) \frac{{j^*}^2}{\epsilon^6} \right),
\end{align*}
 due to~\cite{sarlos06}, where $j^*=k(j+1)$. Further, line number $6$ 
 requires time - for projecting $A$ on $R^*$, which due to an elementary matrix multiplication 
 is $O(\nnz(A)m^*)=O(\nnz(A)\frac{j^*}{\epsilon^2})$. 
  Thus, total   running time of  
 Algorithm $1$ 
 is 
 \begin{align*}
 &O\left(\nnz(A) \frac{j^*}{\epsilon^3} +(n + d) \frac{{j^*}^2}{\epsilon^6} +\nnz(A)\frac{j^*}{\epsilon^2}\right)
 =&O\left(\nnz(A) \frac{j^*}{\epsilon^3} +(n + d) \frac{{j^*}^2}{\epsilon^6}\right).
%
\end{align*}
 Clearly, if $n\geq d$ and $j^*=o(n)$, or, if  $n<d$ and $j^*=o(d)$, then
our   running time is better than that of~\cite{FSS13}.

  As a corollary of Theorem~\ref{thm:thmmainKJ}, and using the known techniques 
from~\cite{FSS13,VaradarajanX12,FeldmanL11} on $A^*$, we present the cardinality reduction 
step of coreset construction as follows:

\begin{cor}[Corollary $9.1$ of \cite{FSS13}]
 Let $A\in \{1, 2,\ldots,\Lambda\}^{n\times d}$, with $\Lambda\in (nd)^{O(1)}$,  $d\in n^{O(1)}.$
  There is a matrix $\mathcal{Q}\in \R^{l\times d'}$ with $l=\poly(2^{kj},\frac{1}{\epsilon}, \log n, \log \Lambda)$, 
  $d'=O({k(j+1)}/{\epsilon^2})$; and a weight function associated with the rows of $\mathcal{Q}$, 
  \textit{i.e.} $w:\mathcal{Q}_{i^*}\rightarrow[0, \infty)$ such that 
  for every closed set $\mathcal{C}$, which is the union of $k$ affine $j$-subspaces of $\R^d$, 
  the following holds with high  probability
 \[
  (1-\epsilon)\Sigma_{i=1}^n\dist^2(A_{i^*}, \mathcal{C}) \leq  \Sigma_{i=1}^{l}w(\mathcal{Q}_{i^*})\dist^2(\mathcal{Q}_{i^*}, \mathcal{C})
  \leq(1+\epsilon)\Sigma_{i=1}^n\dist^2(A_{i^*}, \mathcal{C}).
 \]
\end{cor}
 As a corollary of Theorem~\ref{thm:thmmainKJ}, we present randomized coreset  result 
 for $k$-mean clustering. However, Theorem $8$ of~\cite{CohenEMMP151} independently offers a tighter dimension reduction bound  $\lceil \frac{k}{\epsilon} \rceil$ for $k$-means.
  \begin{cor}
    Let $A\in \R^{n\times d}$, $\epsilon\in(0, 1)$, and $k$ an integer less than $(d-1)$ and $(n-1)$. 
  Then there is a randomized algorithm which outputs a matrix $A'$ of dimension $O({k}/{\epsilon^2})$ 
    such that for every set of $k$ points $\{c_i\}_{i=1}^k\in \R^d$ represented as the rows of matrix $C$,  
    the following holds with high probability:
 $$
   |\left(\dist^2(A', C)+\Delta'\right)-\dist^2(A, C)|\leq \epsilon \dist^2(A, C).
$$
The   running time of the algorithm is
$ {O}\left(\nnz(A) \frac{k}{\epsilon^3} + (n + d)\frac{{k}^2}{\epsilon^6}\right).$
   Where, 
   $\Delta'= ||A-A^{O({k}/{\epsilon^2})}||^2_F$; and $\dist^2(A, C)$ denotes the sum 
   of square distances from each row of $A$ to its closest point in $C$. 
  \end{cor}

\vspace{-0.5cm}
\section{Conclusion and open problems}\label{sec:conclusion} 
 We presented a randomized coreset construction for projective clustering 
 \textit{via} low rank approximation. We first revisited the result of~\cite{CohenEMMP151} 
 for the subspace clustering, and then  extended their result to construct a randomized coreset  for projective clustering. 
 We showed that our construction is significantly faster (when the values of $k$ and $j$ are small), 
 as compared to the corresponding  deterministic construction of~\cite{FSS13}, and it also maintains nearly 
the same accuracy. Our work leaves  several open problems - improving the dimensionality reduction bounds for projective
 clustering, or giving a matching lower bound for the same. Another important open problem is to come up with the 
 dimension reduction step of coreset construction using feature selection algorithms such as \textit{row/column 
 subset selection}~\cite{BoutsidisMD09}.

\bibliographystyle{abbrv}
\bibliography{reference}
\appendix
 \section{Appendix}

\textbf{Proof of Lemma~\ref{lem:projbound}: }
We first express the term $||AX||_F^2- ||\tilde{A}X||_F^2$ in the form of $\tr$ function:
\begin{align*}
  &||AX||_F^2- ||\tilde{A}X||_F^2=\tr((AX)^T(AX))-\tr((\tilde{A}X)^T\tilde{A}X)\\
  &=\tr(X^TA^TAX)-\tr(X^T\tilde{A}^T\tilde{A}X)\\
  &=\tr(X^T(A^TA-\tilde{A}^T\tilde{A})X)=\tr(XX^T(A^TA-\tilde{A}^T\tilde{A}))\numberthis\label{eq:eq999}.
 \end{align*}
 The above equalities follows due to definition of $\tr$ function   $||A||^2_F=\tr(A^TA)$, and due to cyclic 
 and linear properties of $\tr$ function (Fact~\ref{fact:trace}).  
Let we denote the matrix $XX^T$ by $P$, and $(A^TA-\tilde{A}^T\tilde{A})$ by matrix $M$. 
Thus, the problem reduces to bounding the term $\tr(PM)$. 
Let $\lambda_i(M)$ is the $i$th eigenvalue, and 
$\{w_i\}^d_{i=1}$ be the eigenvectors of $M$, then 
$M=\Sigma_{i=1}^d \lambda_i(M)w_iw_i^T$.
The following expression holds due to linearity of trace function.
\begin{align*}
 \tr(PM)=\tr\left(P \Sigma_{i=1}^d \lambda_i(M)w_iw_i^T \right)= \Sigma_{i=1}^d \lambda_i(M)\tr\left(Pw_iw_i^T \right)
 \end{align*}
Further, we bound the summation $\Sigma_{i=1}^d\tr\left(Pw_iw_i^T \right)$,
\begin{align*}
 \Sigma_{i=1}^d\tr\left(Pw_iw_i^T \right)
 &=\tr(PWW^T)=\tr(P^TPWW^TWW^T)
 =\tr(P^TPWW^T)\\
 &=\tr(PWW^TP^T)=||PW||^2_F\leq ||P||_F^2=||XX^T||_F^2\leq j \numberthis\label{eq:eq1200}
\end{align*}
where, $W \in \R^{d\times d}$ having columns as eigenvectors of $M.$ 
The above euqalities follow as $P=XX^T$, then $P^TP=P$; similarly $WW^T=WW^TWW^T$, also $X^TX=I, W^TW=I.$
Finally, the inequality $||PW||^2_F\leq ||P||_F^2$ follows from Fact~\ref{fact:projDecNorm}.

Further,  $P=XX^T$ has all singular values either $1$ or $0$. 
\begin{align*}
 0\leq \tr\left(Pw_iw_i^T \right)=w_i^TPw_i\leq ||w_i||^2_2||P||^2_2\leq 1 \numberthis\label{eq:eq1201}
\end{align*}
Thus, for  $1\leq i\leq d$,  $\tr\left(Pw_iw_i^T \right)$ has $d$ values, 
and each value is at most $1$ (Equation~\ref{eq:eq1201}), and 
sum of all of them is at most $j$ (Equation~\ref{eq:eq1200}). 

Further as $M$ is symmetric,    we have 
\begin{align*}
M&=A^TA-{\tilde{A}}^T\tilde{A}= A^TA-(A-(A-\tilde{A}))^T(A-(A-\tilde{A}))\\
&= A^TA-(A^T-(A-\tilde{A})^T)(A-(A-\tilde{A}))\\
&= A^TA- A^TA + (A-\tilde{A})^TA + A^T(A-\tilde{A})-(A-\tilde{A})^T(A-\tilde{A})\\
&=   (A-\tilde{A})^TA + A^T(A-\tilde{A})-(A-\tilde{A})^T(A-\tilde{A})\\
&=   (A-\tilde{A})^T(A-\tilde{A}+\tilde{A}) + (A-\tilde{A}+\tilde{A})^T(A-\tilde{A})-(A-\tilde{A})^T(A-\tilde{A})\\
&=   (A-\tilde{A})^T(A-\tilde{A})+(A-\tilde{A})^T\tilde{A} + (A-\tilde{A})^T(A-\tilde{A}) + \tilde{A}^T(A-\tilde{A})\ldots \\&\ldots-(A-\tilde{A})^T(A-\tilde{A})\\
&=(A-\tilde{A})^T(A-\tilde{A})\numberthis\label{eq:eq1802}
\end{align*}
Equality~\ref{eq:eq1802} holds due to $(A-\tilde{A})^T\tilde{A}=\tilde{A}^T(A-\tilde{A})=0$ because rows of $\tilde{A}$ and 
$(A-\tilde{A})$ lie in orthogonal subspaces.
Equality~\ref{eq:eq1802} and Fact~\ref{fact:posSemiDef} 
shows that  $M$ is a positive semidefinite matrix,  and as a consequence it
has all nonnegative eigenvalues. Then, the summation $\Sigma_{i=1}^d\lambda_i(M)\tr\left(Pw_iw_i^T \right)$ is maximized 
when  $\tr\left(Pw_iw_i^T \right)=1$, that is for those eigenvectors
  which corresponds to  largest magnitude eigenvalues of $M$. Thus,
\[
 0\leq \tr(PM)=\Sigma_{i=1}^d \lambda_i(M)\tr\left(Pw_iw_i^T \right)\leq \Sigma_{i=1}^j \lambda_i(M).
\]
As a consequence, we have $0\leq \tr(XX^T(A^TA-\tilde{A}^T\tilde{A}))\leq \Sigma_{i=1}^j \lambda_i(A^TA-\tilde{A}^T\tilde{A})=\Sigma_{i=1}^j \sigma^2_i(A-\tilde{A})=
||(A-\tilde{A})_j||^2_F.$ Here matrix $(A-\tilde{A})_j$ is the matrix restricted to rank $j$ of the matrix $A-\tilde{A}$. Further, 
as matrix $\tilde{A}$ is of rank at most $m$, $\tilde{A}+(A-\tilde{A})_j$ is of rank at most $m+j.$
Thus, we have 
\begin{align*}
 ||A-\left(\tilde{A}+(A-\tilde{A})_j\right)||^2_F &\geq \Sigma^d_{i=m+j+1} \sigma_i^2 \numberthis\label{eq:eq1202}\\
 ||A-\tilde{A}||_F^2-||(A-\tilde{A})_j||^2_F &\geq \Sigma^d_{i=m+j+1} \sigma_i^2 \numberthis\label{eq:eq1203}\\
 ||(A-\tilde{A})_j||^2_F &\leq ||A-\tilde{A}||_F^2-\Sigma^d_{i=m+j+1} \sigma_i^2 \\
 ||(A-\tilde{A})_j||^2_F &\leq (1+\epsilon)||A-A^{(m)}||_F^2-\Sigma^d_{i=m+j+1} \sigma_i^2 \numberthis\label{eq:eq1205}\\
 &= (1+\epsilon)\Sigma^d_{i=m+1}\sigma_i^2-\Sigma^d_{i=m+j+1} \sigma_i^2\\
 &= \Sigma^{m+j}_{i=m+1}\sigma_i^2 +\epsilon\Sigma^d_{i=m+1} \sigma_i^2 \\
 &\leq \epsilon \Sigma_{j+1}^{d}\sigma_i^2 +\epsilon\Sigma^d_{i=j+1} \sigma_i^2\numberthis\label{eq:eq1206}\\
 &=2\epsilon||AX^{\perp}||_F^2
\end{align*}
Inequality~\ref{eq:eq1202} follows from Fact~\ref{fact:forbNormDef};
Inequality~\ref{eq:eq1203} follows from Fact~\ref{fact:fact2}, where 
$(A-\tilde{A})_j=(A-\tilde{A})XX^T$ for an orthonormal matrix $X\in \R^{d\times j}$;
Inequality~\ref{eq:eq1205} follows from Theorem~\ref{theorem:randomProj};
Inequality~\ref{eq:eq1206} holds as $m=\lceil\frac{j}{\epsilon}  \rceil$, and due to the following:
\[
 \Sigma^{m+j}_{i=m+1}\sigma_i^2\leq j\sigma_{m+1}^2= \epsilon m \sigma_{m+1}^2\leq \epsilon m \sigma_{j+1}^2 \leq 
 \epsilon \Sigma^{m+j}_{i=j+1}\sigma_i^2 \leq \epsilon \Sigma^{d}_{i=j+1}\sigma_i^2.
\]

\textbf{Proof of Lemma~\ref{lem:difFrobNorm}} \begin{align*}
  ||A^*{X^*}{X^*}^T-A{X^*}{X^*}^T||^2_F  &=||A{X^*}{X^*}^T-A^*{X^*}{X^*}^T||^2_F\\
  &=||(A-A^*){X^*}{X^*}^T||^2_F\\
 &=\tr\left((A-A^*){X^*}{X^*}^T\left((A-A^*){X^*}{X^*}^T  \right)^T\right)\\
 &=\tr\left((A-A^*){X^*}{X^*}^T{X^*}{X^*}^T(A-A^*)^T\right)\\
 &=\tr\left((A-A^*)^T(A-A^*){X^*}{X^*}^T\right)\numberthis\label{eq:eq1232}\\
 &=\tr\left({X^*}{X^*}^T(A^TA-{A^*}^T{A^*})\right)\numberthis\label{eq:eq1233}\\
 &\leq\frac{\epsilon^2}{26}||A{{X^*}}^{\perp}||^2_F\numberthis\label{eq:projlem11}~~~
 \end{align*}
 Equality~\ref{eq:eq1232} holds due to cyclic property of $\tr$ function and ${X^*}^T{X^*}=I$;
 Equality~\ref{eq:eq1233} holds due to cyclic property of $\tr$, and 
 $(A-A^*)^T(A-A^*)=(A-AR^*{R^*}^T)^T(A-AR^*{R^*}^T)=A^TA-{A^*}^TA^*$ (after simplification, similar to Equation~\ref{eq:eq1802}); 
 finally  Inequality~\ref{eq:projlem11} holds by following the steps of proof of 
 Lemma~\ref{lem:projbound} (from Equation~\ref{eq:eq999}), 
 where we replace  $\epsilon$ by $\frac{\epsilon^2}{52};$   $j$ by $j^*$, $A$ by $A^*$, $X$ by $X^*$.

\end{document}